\definecolor{darkred}{rgb}{0.8,0.1,0.1}
\definecolor{Gray}{gray}{0.92}
\definecolor{Gray2}{gray}{0.75}
\definecolor{maroon}{cmyk}{0,0.87,0.68,0.32}
\newtheorem{definition}{Definition}
\newtheorem{proposition}{Proposition}
\newtheorem{lemma}[proposition]{Lemma}
\newtheorem{Example}{Example}
\newtheorem{theorem}[proposition]{Theorem}
\newtheorem{remark}{Remark}
\newenvironment{proof}{\noindent \textit{{Proof.}~}}{\hfill $\square$}
\def\squareforqed{\hbox{\rlap{$\sqcap$}$\sqcup$}}
\def\qed{\ifmmode\squareforqed\else{\unskip\nobreak\hfil
		\penalty50\hskip1em\null\nobreak\hfil\squareforqed
		\parfillskip=0pt\finalhyphendemerits=0\endgraf}\fi}
\def\endenv{\ifmmode\;\else{\unskip\nobreak\hfil
		\penalty50\hskip1em\null\nobreak\hfil\;
		\parfillskip=0pt\finalhyphendemerits=0\endgraf}\fi}
\newcommand{\nc}{\newcommand}
\nc{\rnc}{\renewcommand}
\nc{\bra}[1]{\langle#1|}
\nc{\ket}[1]{|#1\rangle}
\nc{\ketbra}[2]{|#1\rangle\!\langle#2|}
\nc{\braket}[2]{\langle#1|#2\rangle}
\nc{\braandket}[3]{\langle #1|#2|#3\rangle}
\nc{\proj}[1]{| #1\rangle\!\langle #1 |}
\nc{\avg}[1]{\langle#1\rangle}
\nc{\Rank}{\operatorname{Rank}}
\nc{\smfrac}[2]{\mbox{$\frac{#1}{#2}$}}
\nc{\tr}{\operatorname{Tr}}
\nc{\ox}{\otimes}
\nc{\supp}{{\operatorname{supp}}}
\newcommand{\floor}[1]{\lfloor #1 \rfloor}
\begin{document}

	\title{Erasing, Converting, and Communicating: The Power of Resource-Nongenerating Operations}
	
	\author{Xian Shi}\email[]
	{shixian01@gmail.com}
	\affiliation{College of Information Science and Technology,
		Beijing University of Chemical Technology, Beijing 100029, China}

	%
	
	
	
	\date{\today}
	
	\pacs{03.65.Ud, 03.67.Mn}

\begin{abstract}
	We investigate resource nongenerating operations in both static and dynamical quantum resource theories. For the static scenarios, we derive a sufficient condition for state transformations under resource nongenerating operations. Then we construct a dynamical resource theory where resource nongenerating operations constitute the set of free operations, and we propose an axiomatic approach to quantify the dynamical resource. We further analyze the erasure of the dynamical resources. As applications, we establish bounds on the rate of state conversions under resource nongenerating operations in a generic convex resource theory and obtain capacity bounds for classical communication tasks assisted by dynamical coherence. Our results clarify the key roles of resource nongenerating operations in quantum information processing tasks.
\end{abstract}

			\maketitle

\section{Introduction}

With the rapid development of quantum information science \cite{nielsen2002quantum}, it has been gradually recognized that quantum entanglement \cite{horodecki2009quantum}, quantum discord  \cite{modi2012classical}, quantum coherence \cite{streltsov2017colloquium}, and other related quantities play essential roles in quantum information processing. As more quantum information-theoretic objects emerge, it becomes increasingly valuable to develop a unified framework for understanding the advantages conferred by quantum theory. Quantum resource theory provides such a framework  \cite{chitambar2019quantum}. A generic resource theory $R$
 consists of at least two elements: a set of free states 
$\mathcal{F}_R$ , which do not possess the resource, and a set of free operations 
$\mathcal{O}_R$, which cannot generate the resource from free states. Two fundamental issues in any resource theory are (i) how to quantify the amount of resource in a state, and (ii) how to determine whether one state can be transformed into the other under the allowed operations. However, analyzing state convertibility under 
$\mathcal{O}_R$  is always challenging. For example, in the resource theory of entanglement, the free states are separable states and the free operations are local operations and classical communication (LOCC). Deciding whether two mixed entangled states are interconvertible under LOCC is notoriously difficult. One approach to overcome this difficulty is to enlarge the set of free operations. The maximal such set within the framework of a resource theory is the class of resource-nongenerating operations, which map free states to free states [4]. State convertibility under this broader class of operations has been investigated for entanglement \cite{contreras2019resource}, coherence\cite{aberg2006quantifying}, and others \cite{ku2022,Vempati2022}.

In parallel with quantum states, quantum channels can also be viewed as carriers of resources. Similar to states, channels may evolve under various configurations. The most general transformation between two channels is described by a quantum supermap  \cite{chiribella2008transforming,gour2019comparison}, among which superchannels have been found to play key roles in quantum information processing  \cite{liu2019resource,regula2021,liu2020operational,fang2022no,shi2025bounds,zhu2025classical}. The study of quantum dynamical resource theories has thus attracted growing attention. General frameworks for the quantification and transformation of dynamical resources have been proposed \cite{regula2021one,takagi2020,PhysRevA.99.032317,PhysRevResearch.3.023096,gour2021uniqueness,gour2019quantify,liu2020op}, and dynamical extensions of specific static resource theories have also been explored \cite{wang2019resource,wang2019quantifying,gour2020dynamical,takagi2022one,zhou2022quantifying,hsieh2024resource}. Following the static cases, resource-nongenerating operations (RNGs) also play a central role in the dynamical setting \cite{puchala2021dephasing,kim2021one,hsieh2025dynamical,PhysRevA.111.022447}. Nevertheless, little is studied about the dynamical resources associated with RNGs in the general framework of resource theories.
In this work, we address this gap. Our main contributions are summarized as follows:
\begin{itemize}
	\item State convertibility: We establish a sufficient condition for the convertibility between two states under RNGs.
	\item Axiomatic framework for the dynamical resource: We introduce a generic dynamical resource theory in which the free channels are RNGs, and develop an axiomatic approach to quantify the resource of channels.
	\item Resource erasure: We investigate the task of erasing dynamical resources and provide a bound of this process.
	\item  Applications to quantum information tasks:
	\begin{itemize}
		\item Rates under asymptotic RNGs: We present a lower bound of the conversion rates of quantum states under asymptotic RNGs of a generic convex resource theory.
		\item  Classical communication capacity: We analyze the capacity of a classical communication task assisted by the dynamical resources of maximally incoherent operations.
	\end{itemize}
\end{itemize}

The rest of this article is organized as follows. In Sec. \ref{s1}, we show first the mathematical notions needed in this article, then we present the assumptions  of the quantum static resource theories considered here. Besides, we obtain a sufficient condition on the transformations of two quantum states under the resource nongenerating operations. In Sec. \ref{s3}, we propose how to quantify a channel under the dynamical resource theory of RNGs, then we address the erasure task of a quantum operation with the aid of absolutely resource nongenerating operations. In Sec. \ref{s4}, we consider two applications of the RNGs, we first present the bounds of conversion rates of quantum states through the resource nongenerating operations, we also address the classical capacity of a communication task under the quantum dynamical resource of coherence.

\section{Preliminary Knowledge}\label{s1}

 Here we first introduce a generic framework of a resource theory. Assume $\mathcal{H}$ is {\color{black}a Hilbert space associated to the system} with $dim\mathcal{H}=d$, $\mathcal{D}_{\mathcal{H}}$ is a set consisting of the set of states, which are semidefinite positive and $\mathrm{tr}\rho=1,$
\begin{align*}
	\mathcal{D}_{\mathcal{H}}=\{\rho|\rho\ge 0,\mathrm{tr}\rho=1\},
\end{align*}  
$\mathcal{C}_{\mathcal{H}}$ is a set of linear maps which are trace-preserving and completely positive, here we denote the element in $\mathcal{C}_{\mathcal{H}}$ as an operation.
 
A standard static resource theory defined on $\mathcal{H}$ consists of a set of free states $\mathcal{F}_{R}\subset \mathcal{D}_{\mathcal{H}}$ and a set of free operations $\mathcal{O}_{R}\subset \mathcal{C}_{\mathcal{H}}$, which is denoted as $\langle \mathcal{F}_{R},\mathcal{O}_{R}\rangle$ \cite{chitambar2019quantum}. {\color{black} For a resource $\langle \mathcal{F}_R,\mathcal{O}_R\rangle$, there exists a fundamental restriction: for any $\Phi\in \mathcal{O}_R$ and $\rho\in \mathcal{F}_R$, $\Phi(\rho)\in \mathcal{F}_R$, which is denoted as the \emph{golden rule}. We call the operations compatible with the \emph{golden rule} resource nongenerating operations ({RNGs}), and denote their set by $\mathcal{M}_R$,  defined as,
\begin{align*}
	\mathcal{M}_{R}=\{\mathcal{L}\in \mathcal{C}_{\mathcal{H}}|\mathcal{L}(\rho)\in \mathcal{F}_{R},\forall  \rho\in \mathcal{F}_R\}.
\end{align*}
Based on the characterization of $\mathcal{M}_R,$ any class of free operations is a subset of $\mathcal{M}_R,$ that is,
comparing with a generic class of  operations $\mathcal{O}_R$, the constraints defining resource nongenerating operations (RNGs) are typically less restrictive. For instance, in the resource theory of bipartite entanglement, non-entangling operations coincide with RNGs \cite{brandao2008}, and this class is strictly larger than local operations and classical communication (LOCC) \cite{horodecki2009quantum,chitambar2014everything}. Moreover, since RNGs form the maximal set of operations compatible with the \emph{golden rule} of resource theories, they can be used to reveal or rule out certain features of a given theory-most notably reversibility. In the resource theory of coherence, coherence is irreversible under incoherent operations \cite{PhysRevLett.116.120404}, whereas it becomes reversible under maximally incoherent operations \cite{Chitambar2018dephasing,8863412}. Similar phenomena have been reported in resource-theoretic approaches to thermodynamics \cite{horodecki2013fundamental,brandao2015second,Faith2019thermodynamic}. Nevertheless, the irreversibility of entanglement persists not only under LOCC \cite{PhysRevLett.86.5803} and under operations that preserve positivity of the partial transpose \cite{PhysRevLett.119.180506}, but also under non-entangling operations \cite{lami2023}. At last, this class of operations often enable stronger no-go results than weaker free sets of operations.  For example, in the resource of entanglement, as LOCC is difficult to characterize, considerable effort has been devoted to the study of state convertibility under RNGs \cite{brandao2008,contreras2019resource,chitambar2020entanglement,lami2023}.
}

 Next we list the properties that the static convex resource theories have considered here:
 \begin{itemize}
 	\item[(R1)]\label{R1} $\mathcal{F}_{R}$ is convex {\color{black}and compact}.
 	 	\item[{\color{black}(R2)}] $\mathcal{M}_{R}$ is convex.
 	\item[(R3)]  if $\mathcal{E}_1,\mathcal{E}_2\in\mathcal{M}_R$, $\mathcal{E}_1\circ\mathcal{E}_2,\mathrm{tr}_S\mathcal{E}_1\in\mathcal{M}_R$, here $S$ stands for the global systems and partial systems.
 	\item[(R4)] The identity operation $\mathcal{I}\in\mathcal{M}_R$. 
  
 \end{itemize}
 
 \begin{remark}

 		Assume $\Lambda_1$ and $\Lambda_2$ are two operations in $\mathcal{M}_R$, $\rho$ is an arbitrary state in $\mathcal{F}_R$, when $p\in [0,1]$,
 		\begin{align*}
 			(p\Lambda_1+(1-p)\Lambda_2)(\rho)=	p\Lambda_1(\rho)+(1-p)\Lambda_2(\rho)\in \mathcal{F}_R,
 		\end{align*}
 		the first equality is due to that $\mathcal{F}_R$ is convex. Hence, (R1) $\Longrightarrow$ {\color{black}(R2).}

 \end{remark}
 
For a set of free states $\mathcal{F}_{R}$, it is not always closed under operation $\otimes$. For example, when the resource is the set of genuinely entangled states in a tripartite system, the free set of states is the biseparable states, which is not closed under the tensor operation \cite{yamasaki2022activation,Carlos2022}. In response to the above problem, we define the following set of states $\mathcal{F}_{R}^{\tilde{T}}$ as follows,
\begin{align*}
	\mathcal{F}_{R}^{\tilde{T}}=\{\sigma\in\mathcal{F}_{R}|\rho\otimes\sigma\in\mathcal{F}_{R},\forall\rho\in\mathcal{F}_{R}\}.
\end{align*}
Generally, the above set is not always empty. For the resource theory of genuinely entangled states, the maximally mixed state $\frac{I_A}{d}\otimes\frac{I_B}{d}\otimes\frac{I_C}{d}$ is in $\mathcal{F}_{R}^{\tilde{T}}$.

A key tool to study generic resource theories is the function which quantifies the resource of a state. {\color{black}It is a family of functions which map the states to $\mathbb{R}^{+}\cup\{0\}$.} We denote the above functions as quantifiers for a given resource theory. Various kinds of quantifiers have been proposed for generic resource theories. Generally, a valid quantifier $\mathtt{f}$ for static resource theories should satisfy the following properties O1 and O2 at least,
\begin{itemize}
	\item[O1.] $\mathtt{f}(g)=0,$ $\forall g\in \mathcal{F}_R$.
	\item[O2.] $\mathtt{f}(\Lambda(\rho))\le \mathtt{f}(\rho)$, $\forall \Lambda\in \mathcal{M}_{R}$.
	\item[O3.] $\mathtt{f}(\rho)=0\Leftrightarrow \rho\in \mathcal{F}_R.$
	\item[O4.] For arbitrary states $\rho_1$ and $\rho_2$, $p\in (0,1)$,
	\begin{align*}
		\mathtt{f}(p\rho_1+(1-p)\rho_2)\le p\mathtt{f}(\rho_1)+(1-p)\mathtt{f}(\rho_2).
	\end{align*}
\end{itemize}

When the quantifier $\mathtt{f}$ satifies the property O3, we say $\mathtt{f}$ is faithful. And $\mathtt{f}$ is convex if the quantifier $\mathtt{f}$ satisfies the property O4. 
\begin{remark}
	{\color{black} As $\mathcal{M}_R$ is the maximal, if a quantifier $\mathtt{f}(\cdot)$ satisfies (O2) under $\langle\mathcal{F}_R,\mathcal{M}_R\rangle$, then $$\mathtt{f}(\Lambda(\rho))\le \mathtt{f}(\rho),\forall \Lambda\in\mathcal{O}_{R}.$$ 	
	That is, if we could show $\mathtt{f}$ is a quantifier for $\langle\mathcal{F}_R,\mathcal{M}_R\rangle$, the same is for $\langle\mathcal{F}_R,\mathcal{O}_R\rangle.$}
\end{remark}

A common-used quantifier is the robustness-based measure. It is widely applied in the specific resource theories, such as, entanglement \cite{vidal1999robustness}, coherence \cite{napoli2016robustness} and quantum steering \cite{piani2015}. Here we propose the robustness-based measures of generic resoure theory. Assume $\rho\in\mathcal{D}_{\mathcal{H}},$ its robustness-based measure $R_G(\rho)$ and standard robustness-based measure $\mathscr{R}_G(\rho)$
are defined as follows,
\begin{align}
{R}_G(\rho)=&\inf \{r|\frac{\rho+r\sigma}{1+r}\in \mathcal{F}_{R},\sigma\in \mathcal{D}\},\nonumber\\
	\mathscr{R}_G(\rho)=&\inf \{r|\frac{\rho+r\sigma}{1+r}\in \mathcal{F}_{R},\sigma\in \mathcal{F}_{R}\}.\label{r}
\end{align}

\begin{remark}
	Comparing the definition of $\mathscr{R}_{G}(\cdot)$ with $R_{G}(\cdot)$, the difference is that the mixing term in $\mathscr{R}_{G}(\cdot)$ is a free state, while it is an arbitrary state for $R_G(\cdot)$. When the resource is the bipartite entanglement, it was shown that the values of the robustness of entanglement and the generalized robustness of entanglement for a pure state are the same \cite{steiner2003generalized}.  Nevertheless, it maybe not valid for generic resource theories.
\end{remark}

\begin{theorem}\label{th5}
	Assume $\langle\mathcal{F}_{R},\mathcal{M}_R\rangle$ satisfies (R1)-(R4), then the quantifiers $R_G(\cdot)$ and $\mathscr{R}_G(\cdot)$ for $\langle\mathcal{F}_R,\mathcal{M}_R\rangle$ satisfy the following properties,
	\begin{itemize}
		\item[(i)] For any mixed state $\rho$, $R_G(\rho),\mathscr{R}_G(\rho)\in[0,\infty).$
		\item[(ii)] $R_G(\cdot)$ and $\mathscr{R}_G(\cdot)$ are valid quantifiers.
		\item[(iii)] $R_G(\cdot)$ and $\mathscr{R}_G(\cdot)$ satisfy the properties O3 and O4.
		\item[(iv)] Assume $\rho$ is a state, $\sigma$ is the optimal free state in terms of $\mathcal{R}_G(\cdot)$ for $\rho$.  If $s>\mathcal{R}_G(\rho)$,
		\begin{align*}
			\frac{\rho+s\sigma}{1+s}\in\mathcal{F}_{R}.
		\end{align*}
		Here $\mathcal{R}_G(\cdot)$ is $R_G(\cdot)$ or $\mathscr{R}_G(\cdot).$
	\end{itemize}
\end{theorem}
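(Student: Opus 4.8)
The plan is to handle both measures in parallel, the only distinction being whether the mixing state $\sigma$ ranges over all of $\mathcal{D}_{\mathcal{H}}$ (for $R_G$) or is confined to $\mathcal{F}_R$ (for $\mathscr{R}_G$). Throughout I would record a feasible point as a pair $(r,\sigma)$ with $r\ge 0$ and $\tau:=\frac{\rho+r\sigma}{1+r}\in\mathcal{F}_R$, equivalently $\rho=(1+r)\tau-r\sigma$; this affine rewriting drives every part. For (i), non-negativity is immediate because $\frac{\rho+r\sigma}{1+r}$ is a state only when $r\ge 0$, so the infimum runs over non-negative $r$. For finiteness of $R_G$ I would invoke (R7): since $\frac{I}{d}$ is free and full rank, for every $r\ge d\,\lambda_{\max}(\rho)-1$ the operator $\sigma_r:=\frac{1}{r}\big((1+r)\tfrac{I}{d}-\rho\big)$ is a genuine state (positivity from $r$ large, unit trace by construction) and the associated mixture equals $\frac{I}{d}\in\mathcal{F}_R$, so $R_G(\rho)\le d\,\lambda_{\max}(\rho)-1<\infty$.

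The verification of (ii) and of the O3 part of (iii) is then largely routine. Property O1 holds since, for $g\in\mathcal{F}_R$, the pair $(0,\cdot)$ is feasible and forces the infimum to be $0$. For O2, take a feasible $(r,\sigma)$ for $\rho$ and apply $\Lambda\in\mathcal{M}_R$: linearity gives $\Lambda(\tau)=\frac{\Lambda(\rho)+r\Lambda(\sigma)}{1+r}$, and $\Lambda(\tau)\in\mathcal{F}_R$ because RNOs map free states to free states; for $R_G$ the state $\Lambda(\sigma)$ is an admissible mixer, whereas for $\mathscr{R}_G$ one additionally uses $\Lambda(\sigma)\in\mathcal{F}_R$, so $(r,\Lambda(\sigma))$ is feasible for $\Lambda(\rho)$ and monotonicity follows by taking the infimum over $r$. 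For O3 the nontrivial direction assumes the measure vanishes: choosing feasible $(r_n,\sigma_n)$ with $r_n\to 0$ gives $\|\tau_n-\rho\|=\frac{r_n}{1+r_n}\|\sigma_n-\rho\|\le \frac{2r_n}{1+r_n}\to 0$, whence $\rho=\lim_n\tau_n\in\mathcal{F}_R$ provided $\mathcal{F}_R$ is closed.

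For O4 I would use the standard merging of two near-optimal decompositions: given $(r_i,\sigma_i,\tau_i)$ for $\rho_i$, set $r=pr_1+(1-p)r_2$ and form the convex combinations $\tau=\frac{p(1+r_1)\tau_1+(1-p)(1+r_2)\tau_2}{1+r}$ and $\sigma=\frac{pr_1\sigma_1+(1-p)r_2\sigma_2}{r}$; convexity (R1) of $\mathcal{F}_R$ (and of $\mathcal{D}$) places $\tau,\sigma$ in the required sets, and a direct check gives $p\rho_1+(1-p)\rho_2=(1+r)\tau-r\sigma$, so $r$ is feasible for the mixture and the claimed subadditivity holds after passing to the infimum (the degenerate case $r=0$, both $\rho_i$ free, being trivial). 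Part (iv) is the cleanest: with an optimal $\tau=\frac{\rho+r\sigma}{1+r}\in\mathcal{F}_R$, substituting $\rho=(1+r)\tau-r\sigma$ yields, for any $s>r$,
\begin{align*}
\frac{\rho+s\sigma}{1+s}=\frac{1+r}{1+s}\,\tau+\frac{s-r}{1+s}\,\sigma,
\end{align*}
a convex combination of the free states $\tau$ and $\sigma$ (coefficients non-negative and summing to $1$), hence free by (R1); existence of the optimal $\sigma$ follows from compactness of $\mathcal{F}_R$.

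The step I expect to be the main obstacle is the finiteness of $\mathscr{R}_G$ in (i). Unlike $R_G$, which needs only the full-rank free state $\frac{I}{d}$, the standard robustness requires the mixer $\sigma$ to be free, so one must establish that $\frac{I}{d}$ lies in the relative interior of $\mathcal{F}_R$: then $\sigma:=\frac{I}{d}+\epsilon(\frac{I}{d}-\rho)$ is free for small $\epsilon>0$ and, with $r=1/\epsilon$, the mixture equals $\frac{I}{d}$, giving $\mathscr{R}_G(\rho)\le 1/\epsilon<\infty$. This interior property (equivalently, full-dimensionality of $\mathcal{F}_R$) is genuinely needed — for a theory such as coherence, where the free states are not full-dimensional, $\mathscr{R}_G$ is in fact infinite on resourceful pure states — so the honest route is to carry it as a standing hypothesis alongside the closedness of $\mathcal{F}_R$ already used in O3 and (iv).
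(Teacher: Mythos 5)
Your proposal follows essentially the same route as the paper's proof: every part is driven by the feasible-pair rewriting $\rho=(1+r)\tau-r\sigma$, your O2 argument (pushing an optimal pair through $\Lambda\in\mathcal{M}_R$) is identical to the paper's, your merged decomposition for O4 is the corrected form of the paper's mixture, and your identity in (iv) is literally the paper's convex split $\frac{\rho+s\sigma}{1+s}=\frac{1+r}{1+s}\tau+\frac{s-r}{1+s}\sigma$. Where you go beyond the paper: the paper never actually proves O3 (its item (iii) silently treats only O4), so your closedness argument with $\|\tau_n-\rho\|_1\le\frac{2r_n}{1+r_n}\to 0$ fills a real omission; the paper's O4 display misprints the mixture (it writes $p\sigma_1$ and $(1-p)\sigma_2$ where $pr_1\sigma_1$ and $(1-p)r_2\sigma_2$ are required), which your formulas repair; and for (i) the paper merely cites (R7) and (R9) with no construction, whereas you exhibit the explicit mixer $\sigma_r=\frac{1}{r}\big((1+r)\tfrac{I}{d}-\rho\big)$. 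Most importantly, your closing observation is correct and exposes a genuine defect in the theorem as stated: finiteness of $\mathscr{R}_G$ does not follow from (R1)--(R9). Indeed, for the paper's own running example of coherence, a diagonal mixer can never cancel off-diagonal entries, so $\mathscr{R}_G(\rho)=\infty$ for every coherent $\rho$; claim (i) for $\mathscr{R}_G$ therefore needs exactly the extra hypothesis you name, and stating it as a standing assumption is the honest repair. Two small cautions: first, ``relative interior'' is not enough --- the direction $\tfrac{I}{d}-\rho$ may leave the affine hull of $\mathcal{F}_R$, so you need interiority of $\tfrac{I}{d}$ relative to $\mathcal{D}_{\mathcal{H}}$, i.e.\ precisely the full-dimensionality you give parenthetically; second, your justification of non-negativity is not quite right, since for $-1<r<0$ the operator $\frac{\rho+r\sigma}{1+r}$ can still be a state (take $\sigma=\rho$ free), so non-negativity really rests on the standard convention that the infimum in the definition ranges over $r\ge 0$ --- a definitional point that affects the paper's statement equally.
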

The proof of Theorem \ref{th5} is placed in the Sec. \ref{app}.

Another type of measure for generic resource is defined in terms of a geometric method,
\begin{align*}
	G_R(\ket{\psi})=\inf_{\sigma\in \mathcal{F}_R}[1-F(\ket{\psi}\bra{\psi},\sigma)],
\end{align*}
where $F(\rho,\gamma)=||\sqrt{\rho}\sqrt{\gamma}||_1$ and the infimum takes over all the states $\sigma\in\mathcal{F}_R.$ This measure was first studied for pure states in entanglement theory \cite{SHIMONY1995,Barnum_2001}. It was also applied to study the resource of multipartite entanglement \cite{wei2003,chen2014} and coherence in \cite{Zhang_2017}.

A high-profile research subject for the generic resource theory is to study the feasibility of the transformations between two states under the given resource theory. Next we present a sufficient condition when the convertibility between a pure state and a generic state under the RNGs.
\begin{theorem}\label{t1}
	Assume $\langle\mathcal{F}_{R},\mathcal{M}_{R}\rangle$ is a convex resource, $\psi$ is a pure state with resources. If $\frac{1}{1+\mathcal{R}_G(\sigma)}+G_R(\ket{\psi})\ge1$, then there exists an operation $\Lambda\in\mathcal{M}_{R}$ such that $\Lambda(\psi)=\sigma.$  {\color{black}Here $\mathcal{R}_G(\cdot)$ is $R_G(\cdot)$ or $\mathscr{R}_G(\cdot).$}
\end{theorem}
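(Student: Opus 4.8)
The plan is to produce an explicit measure-and-prepare channel and verify it lies in $\mathcal{M}_R$. First I would rewrite the geometric measure for the pure state $\psi$. Since $\proj{\psi}$ is a rank-one projector, $F(\proj{\psi},\gamma)=\|\proj{\psi}\sqrt{\gamma}\|_1=\sqrt{\langle\psi|\gamma|\psi\rangle}$, so that $1-G_R(\ket{\psi})=\sup_{\gamma\in\mathcal{F}_R}\sqrt{\langle\psi|\gamma|\psi\rangle}$. Because $\gamma\mapsto\langle\psi|\gamma|\psi\rangle$ is linear, its supremum over the convex set $\mathcal{F}_R$ is attained at an extreme point, which in the resource theories under consideration is pure; hence I would record the identity $p_{\max}:=\max_{\rho\in\mathcal{F}_R}\langle\psi|\rho|\psi\rangle=(1-G_R(\ket{\psi}))^2$.

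Next I would let $\tau\in\mathcal{F}_R$ be the optimal free state associated with $\mathscr{R}_G(\sigma)$ (the standard robustness-based measure appearing in the statement) via Theorem \ref{th5}(iv), and define
\begin{align*}
\Lambda(\rho)=\langle\psi|\rho|\psi\rangle\,\sigma+\bigl(1-\langle\psi|\rho|\psi\rangle\bigr)\,\tau.
\end{align*}
This is a valid channel, being the measure-and-prepare map associated with the POVM $\{\proj{\psi},\1-\proj{\psi}\}$, hence completely positive and trace preserving. Moreover $\Lambda(\proj{\psi})=\sigma$ because $\langle\psi|\psi\rangle=1$, so the only remaining task is to verify $\Lambda\in\mathcal{M}_R$.

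To establish $\Lambda\in\mathcal{M}_R$, I would take any $\rho\in\mathcal{F}_R$ and set $p=\langle\psi|\rho|\psi\rangle\in[0,p_{\max}]$, so that $\Lambda(\rho)=p\sigma+(1-p)\tau$. Writing $r=\mathscr{R}_G(\sigma)$ and $p^*=1/(1+r)$, Theorem \ref{th5}(iv) gives $\tfrac{\sigma+s\tau}{1+s}=\tfrac{1}{1+s}\sigma+\tfrac{s}{1+s}\tau\in\mathcal{F}_R$ for every $s>r$; reparametrizing by $p=1/(1+s)$ shows $p\sigma+(1-p)\tau\in\mathcal{F}_R$ for all $p\in[0,p^*)$, the endpoint $p=0$ being $\tau\in\mathcal{F}_R$. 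It therefore suffices to check $p_{\max}\le p^*$. The hypothesis $\frac{1}{1+\mathscr{R}_G(\sigma)}+G_R(\ket{\psi})\ge1$ reads $1-G_R(\ket{\psi})\le p^*$, and combining with $p_{\max}=(1-G_R(\ket{\psi}))^2\le 1-G_R(\ket{\psi})$ yields $p_{\max}\le p^*$. Hence $\Lambda(\rho)\in\mathcal{F}_R$ for every free $\rho$, which completes the construction.

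The step I expect to be the main obstacle is the identification $p_{\max}=(1-G_R(\ket{\psi}))^2$, that is, justifying that the largest fidelity of $\psi$ with the free set — a maximization over all of $\mathcal{F}_R$ — is controlled by the geometric measure, which is defined only through pure free states. I would treat the degenerate case $p_{\max}=0$ (all free states orthogonal to $\psi$) separately, where $\Lambda(\rho)=\tau$ is trivially free, and otherwise rely on the extreme points of $\mathcal{F}_R$ being pure (as for coherence and entanglement) to pin down the equality; this is the only place where the assumed structure of the resource theory, rather than bare convexity, is invoked.
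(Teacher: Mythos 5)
Your proof is correct and is essentially the paper's own argument: you construct the identical measure-and-prepare channel $\Lambda(\rho)=\langle\psi|\rho|\psi\rangle\,\sigma+(1-\langle\psi|\rho|\psi\rangle)\,\tau$ with $\tau$ the optimal free state attaining $\mathscr{R}_G(\sigma)$, and you verify $\Lambda\in\mathcal{M}_R$ by checking that the weight on $\sigma$ in $\Lambda(\rho)$ never exceeds $1/(1+\mathscr{R}_G(\sigma))$ and then applying the robustness decomposition (Theorem \ref{th5}(iv)), which is exactly what the paper does. The only divergence is bookkeeping: where you pass through extreme points to justify $\max_{\rho\in\mathcal{F}_R}\langle\psi|\rho|\psi\rangle=(1-G_R(\ket{\psi}))^2$ (a step forced on you by the paper's definition of $G_R$ as an infimum over \emph{pure} free states, and the one place you need structure beyond convexity), the paper instead bounds, in effect, $\mathrm{tr}(\psi\rho)=F(\psi,\rho)^2\le F(\psi,\rho)\le 1-G_R(\ket{\psi})$ without comment --- its displayed inequality even writes $\min$ where $\max$ is meant --- so your version is, if anything, the more carefully justified of the two.
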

\begin{proof}
	Assume $\rho\in \mathcal{D}_{\mathcal{H}},$ then denote $\Lambda(\cdot)$ as the following map,
	\begin{align*}
		\Lambda(\rho	)=\mathrm{tr}(\psi\rho)\sigma+(1-\mathrm{tr}(\psi\rho))\delta,
	\end{align*}
	here we assume $\delta$ is the optimal for $\sigma$ in terms of the definition $\mathcal{R}_G(\cdot)$ in $(\ref{r})$, $i.$ $e.$, $\sigma+{\mathcal{R}}_G(\sigma)\delta\in (1+\mathcal{R}_G(\sigma))\mathcal{F}_{R}$. Next we show that $\Lambda(\cdot)$ is in $\mathcal{M}_{R}$. Assume $\rho\in\mathcal{F}_{R}$,  then 
	\begin{align*}
		\Lambda(\rho)=&\mathrm{tr}(\psi\rho)\sigma+(1-\mathrm{tr}(\psi\rho))\delta\\
		=&\mathrm{tr}(\psi\rho)(\sigma+\frac{(1-\mathrm{tr}(\psi\rho))}{\mathrm{tr}(\psi\rho)}\delta).
	\end{align*}
	As $$\frac{1}{1+\mathcal{R}_G(\sigma)}\ge{\color{black}\max\limits_{\rho\in\mathcal{F}_R} F(\psi,\rho),}$$ then based on Theorem \ref{th5} (iv), 
	$\Lambda(\rho)\in\mathcal{F}_{R}$. Hence, we construct a RNG that turns $\psi$ to $\sigma$.
\end{proof}

\hspace{3mm}

\section{Dynamical Resource Theory}\label{s3}
Before presenting the results on the transformations of the resource nongenerating operations, we recall the structures of such maps, which are denoted as superchannels \cite{chiribella2008transforming,gour2019comparison},
\begin{align}
M:\hspace{3mm}& \mathcal{C}_{\mathcal{H}_A}\longrightarrow\mathcal{C}_{\mathcal{H}_A}\nonumber\\
&	\mathcal{E}\hspace{4mm}\longrightarrow \mathcal{M}\circ(\mathcal{E}\otimes\mathcal{I}_B)\circ\mathcal{N},\label{f1}
\end{align}
here $\mathcal{N}$ and $\mathcal{M}$ are some quantum channels from $\mathcal{D}_{\mathcal{H}_A}$ to $\mathcal{D}_{\mathcal{H}_{AB}}$ and from $\mathcal{D}_{\mathcal{H}_{AB}}$ to $\mathcal{D}_{\mathcal{H}_A}$, respectively, $\mathcal{I}_B$ is the identity on some ancillary system $\mathcal{H}_B$. A intuitive thought to consider RNGs is to study all the supermaps (\ref{f1}) which could transform any RNG to the other RNG. However, on one hand, it is hard to characterize all such superchannels for a generic resource theory, on the other side, these supermaps lack clear physical intepretations generally. Therefore, in this manuscript, we adopt a more direct scheme which makes the representations of superchannels own clear physical meanings.

Here we restrict the pre-processing operation $\mathcal{N}$ and post-processing operation $\mathcal{M}$ to be RNGs. This restriction is due to the expection that RNGs should be generated freely. Similar to the analysis in the static resource theories, 
the set $\mathcal{M}_{R}$ is not always closed under the tensor product. To further study the properties of  the dynamical resource, we introduce the following concept,
\begin{definition}\label{d1}
	Assume $\Gamma\in \mathcal{M}_{R}$, if $\Gamma$ satisfies the following property, 
	\begin{align*}
		\Lambda\otimes\Gamma\in \mathcal{M}_{R},\hspace{3mm}\forall \Lambda\in \mathcal{M}_R,
	\end{align*}
	then we say $\Gamma$ is an absolutely resource nongenerating operation (ARNG). The set of all ARNGs is denoted as 	$\mathcal{M}_{R}^{\tilde{T}}$,
	\begin{align*}
		\mathcal{M}_{R}^{\tilde{T}}=\{\Gamma\in \mathcal{M}_R|\Lambda\otimes\Gamma\in\mathcal{M}_{R},\forall\Lambda\in\mathcal{M}_{R}\}.
	\end{align*}
\end{definition}

\begin{remark}\label{re4}
{\color{black} As $\mathcal{I}\in \mathcal{M}_R,$ when $\mathcal{E}$ is an arbitrary element in $\mathcal{M}^{\tilde{T}}_R$, then $\mathcal{I}\otimes\mathcal{E}\in \mathcal{M}_R$.
	
	Next for a resource theory $\langle\mathcal{F}_R,\mathcal{M}_R\rangle$, we say $\Lambda\in \mathcal{M}_R$ is completely RNG if $\mathcal{I}_{anc}\otimes\Lambda\in \mathcal{M}_R,$ here $\mathcal{I}_{anc}$ is the identity map of an arbitrary ancillary system $\mathcal{H}_{anc}.$ let us denote the set of all completely RNGs, ${\mathcal{CM}_R},$ as 
	\begin{align*}
		{\mathcal{CM}_R}=\{\Gamma\in \mathcal{M}_R|\mathcal{I}\otimes\Gamma\in\mathcal{M}_R\},
	\end{align*}if $\mathcal{M}_R=\mathcal{M}_R^{\tilde{T}},$ based on the above paragraph, if $\Lambda\in \mathcal{M}_R$, then $\mathcal{I}\otimes\Lambda\in\mathcal{M}_R,$ that is, $\mathcal{M}_R\subset\mathcal{CM}_R.$ And as $\mathcal{CM}_R\subset\mathcal{M}_R$. Hence, $\mathcal{CM}_R=\mathcal{M}_R.$
	
	At last, for a resource theory $\langle\mathcal{F}_R,\mathcal{M}_R\rangle$, let us denote $\overline{\mathcal{CM}_R}$ as 
	\begin{align*}
		\overline{\mathcal{CM}_R}=\{\Gamma\in \mathcal{M}_R|\mathcal{I}\otimes\Gamma,\Gamma\otimes\mathcal{I}\in \mathcal{M}_R\},
	\end{align*}
	here $\mathcal{I}$ is the identity. If $\mathcal{E}$ and $\mathcal{F}$ are arbitrary two elements in $\overline{\mathcal{CM}_R}$, as ${\mathcal{M}_R}$ is closed under the composition, then 
	\begin{align}
		(\mathcal{E}\otimes\mathcal{I})\circ(\mathcal{I}\otimes\mathcal{F})=\mathcal{E}\otimes\mathcal{F}\in{ \mathcal{M}_R}.\label{re1}
	\end{align}
	If $\overline{\mathcal{CM}_R}=\mathcal{M}_R$, then $\mathcal{M}_R^{\tilde{T}}\subset \mathcal{M}_R=\overline{\mathcal{CM}_R}$; due to (\ref{re1}), $\mathcal{M}_R=\overline{\mathcal{CM}_R}\subset \mathcal{M}_R^{\tilde{T}}$, then $\mathcal{M}_R=\mathcal{M}_R^{\tilde{T}}$.

}
\end{remark}

Here we would generalize the operations on the ancillary system to any operation in $\mathcal{M}_R^{\tilde{T}}$.
\begin{definition}[Free super-channel of RNGs]
Assume $\mathcal{G}:\mathcal{M}_{R}\rightarrow\mathcal{M}_{R}$ is a superchannel, if $\mathcal{G}$ can be written as the following,
\begin{align*}
	\mathcal{G}(\mathcal{E})=\mathcal{N}_2\circ(\mathcal{E}\otimes\Lambda)\circ\mathcal{N}_1,
\end{align*}
here $\mathcal{N}_1$ and $\mathcal{N}_2$ are RNGs, and $\Lambda\in \mathcal{M}_{R}^{\tilde{T}}$. In this manuscript, we denote the set of free super-channel of RNGs as $\mathcal{SC}_{\mathcal{H}}.$
\end{definition}

Similar to the static resource, it is also necessary to employ the functions to quantify the dynamical resources. Assume $\mathcal{C}_{\mathcal{H}}$ is the set of all channels acting on the states $\mathcal{D}_{\mathcal{H}},$ we say $\mathbb{F}:\mathcal{C}_{\mathcal{H}}\rightarrow \mathbb{R}^{+}\cup\{0\}$ is a RNG monotone if $\mathbb{F}(\cdot)$ satisfies properties ({P1}) and ({P2}):
\begin{itemize}
	\item[(P1).]  $\mathbb{F}(\Gamma)=0,$ $\forall \Gamma\in \mathcal{M}_{R}$.
	\item[(P2).] Assume $\mathcal{E}$ is an operation of $\mathcal{H}$, 
	\begin{align*}
		\mathbb{F}(\Lambda(\mathcal{E}))\le \mathbb{F}(\mathcal{E}),\hspace{5mm}\forall \Lambda\in \mathcal{SC}_{\mathcal{H}}.
	\end{align*}
\end{itemize}

\noindent	Besides, a RNG monotone $\mathbb{F}(\cdot)$ is faithful if $\mathbb{F}(\cdot)$ satisfies (P3):
	\begin{itemize}
	\item[(P3).] $\mathbb{F}(\mathcal{G})=0$ if and only if $\mathcal{G}\in \mathcal{M}_{R}$.
\end{itemize}
\noindent	$\mathbb{F}(\cdot)$ is convex, if it satisfies (P4):
\begin{itemize}
	\item[(P4).] $\mathbb{F}(p\mathcal{E}_1+(1-p)\mathcal{E}_2)\le p\mathbb{F}(\mathcal{E}_1)+(1-p)\mathbb{F}(\mathcal{E}_2).$\hspace{7mm} $\forall\mathcal{E}_1,\mathcal{E}_2\in \mathcal{C}_{\mathcal{H}},$ $\forall p\in [0,1].$
\end{itemize}

Assume $D: \mathcal{D}_{\mathcal{H}}\times\mathcal{D}_{\mathcal{H}}\longrightarrow \mathbb{R}$ is a bilinear function, and it satisfies the data-processing property, $i.$ $e.$, if $\rho,\sigma\in \mathcal{D}_{\mathcal{H}}$, $\mathcal{E}\in \mathcal{C}_{\mathcal{H}}$, then
\begin{align*}
	D(\rho,\sigma)\ge D(\mathcal{E}(\rho),\mathcal{E}(\sigma)).
\end{align*}  
 In this manuscript, we also assume $D(\cdot,\cdot)$ is a generalized distance, which satisfies the following properties, 
\begin{itemize}
	\item[(D1).] $D(\rho,\sigma)\ge 0$, $\forall \rho,\sigma\in\mathcal{D}_{\mathcal{H}}$, the equality happens if and only if $\rho=\sigma$.
		\item[(D2).] $D(\rho,\sigma)=D(\sigma,\rho)$, $\forall \rho,\sigma\in\mathcal{D}_{\mathcal{H}}.$
	\item[(D3).] $D(\rho,\sigma)+D(\sigma,\gamma)\ge D(\rho,\gamma)$,  $\forall \rho,\sigma,\gamma\in \mathcal{D}_{\mathcal{H}}$.
\end{itemize}  

\begin{definition}
	Assume $\mathcal{F}_R$ is a set of free states, if $\{\rho_n\}$ is a sequence of states in $\mathcal{F}_{R}$, 
	\begin{align*}
		\lim_{n\rightarrow\infty}D(\rho_n,\sigma)=0\Rightarrow \sigma\in\mathcal{F}_{R},
	\end{align*}
	then $\mathcal{F}_R$ is closed under the distance $D(\cdot,\cdot)$.
	
If $\mathcal{B}$ is a set of channels, $\{\mathcal{E}_k\}$ is arbitrary series of  channels which belong to $\mathcal{B}$, 
\begin{align*}
	\lim_{k\rightarrow\infty}\sup_{\rho}D(\mathcal{E}_k(\rho),\mathcal{G}(\rho))=0\Rightarrow \mathcal{G}\in \mathcal{B},
\end{align*}
  we say $\mathcal{B}$ is closed under $D(\cdot,\cdot)$.
\end{definition}

In this subsection, we will denote $\mathcal{M}_{\mathcal{H}}$ and $\mathcal{M}_{\mathcal{H}}^{\tilde{T}}$ as the set of all RNGs and ARNGs acting on $\mathcal{H},$ respectively. Now we could introduce the following measures induced by $D$ to quantify the RNG,
\begin{align}
	\mathbb{F}_D(\mathcal{E})=\inf_{\Lambda\in \mathcal{M}_{\mathcal{H}_A}}\sup_{\Gamma\in \mathcal{M}_{\mathcal{H}_B}^{\tilde{T}},\rho_{AB}}D[(\mathcal{E}\otimes\Gamma)(\rho_{AB}),(\Lambda\otimes\Gamma)(\rho_{AB})],\label{cd}
\end{align}
here the infimum takes over all the operations $\Lambda$ in $\mathcal{M}_{\mathcal{H}_A}$, and the supermum take over all the states $\rho\in \mathcal{D}_{{AB}}$ and $\Gamma\in \mathcal{M}_{\mathcal{H}_B^{\tilde{T}}}$.
\begin{theorem}\label{th3}
	Assume ${D}$ is a distance between two states, then
		\begin{itemize}
		\item[1.] if ${D}$ satisfies the data-processing property, 
		\begin{align*}
			D(\Lambda(\rho),\Lambda(\sigma))\le D(\rho,\sigma),\hspace{3mm}\forall \rho,\sigma\in \mathcal{D}_{\mathcal{H}}, \Lambda\in \mathcal{C}_{\mathcal{H}}.
		\end{align*}
		Then $\mathbb{F}_{D}(\cdot)$ is a RNG monotone. 
		\item[2.] if $\mathcal{E}_2\in \mathcal{M}_{R}^{\tilde{T}}$, then
		\begin{align*}
			\mathbb{F}_{D}(\mathcal{E}_1\otimes\mathcal{E}_2)=\mathbb{F}_{D}(\mathcal{E}_1)
		\end{align*}
		\item[3.] if $\mathcal{M}_{R}$ is closed under $D(\cdot,\cdot)$, $\mathbb{F}_D(\cdot)$ is faithful.
	\end{itemize}  
\end{theorem}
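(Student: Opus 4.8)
The plan is to isolate one structural fact that powers all three items and then dispatch each bullet in turn. The engine is an \emph{ARNO-preservation lemma}: the identity channel is an ARNO, the tensor product of two ARNOs is again an ARNO (immediate from the definition of $\mathcal{M}_{R}^{\tilde{T}}$: for $\Gamma_1,\Gamma_2\in\mathcal{M}_{R}^{\tilde{T}}$ and any $\Xi\in\mathcal{M}_R$ one has $\Xi\otimes\Gamma_1\in\mathcal{M}_R$ and hence $(\Xi\otimes\Gamma_1)\otimes\Gamma_2\in\mathcal{M}_R$), and consequently every free super-channel $\mathcal{G}\in\mathcal{SC}_{\mathcal{H}}$ maps $\mathcal{M}_{R}^{\tilde{T}}$ into itself. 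For the last claim I write $\mathcal{G}(\Lambda)=\mathcal{N}_2\circ(\Lambda\otimes\Lambda_0)\circ\mathcal{N}_1$ and test $\Xi\otimes\mathcal{G}(\Lambda)=(\mathcal{I}\otimes\mathcal{N}_2)\circ(\Xi\otimes(\Lambda\otimes\Lambda_0))\circ(\mathcal{I}\otimes\mathcal{N}_1)$ on a free state: the state stays free through $\mathcal{I}\otimes\mathcal{N}_1$ (since $\mathcal{I}$ is an ARNO and $\mathcal{N}_1$ an RNO), through $\Xi\otimes(\Lambda\otimes\Lambda_0)$ (since $\Lambda\otimes\Lambda_0$ is an ARNO), and through $\mathcal{I}\otimes\mathcal{N}_2$, so $\mathcal{G}(\Lambda)\in\mathcal{M}_{R}^{\tilde{T}}$.

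For the first bullet I would fix a near-optimal ARNO $\Lambda^{*}$ for $\mathcal{E}$ and submit $\mathcal{G}(\Lambda^{*})$ as competitor in the infimum defining $\mathbb{F}_D(\mathcal{G}(\mathcal{E}))$; it is admissible by the preservation lemma. Using $\mathcal{G}(\mathcal{E})\otimes\Gamma=(\mathcal{N}_2\otimes\mathcal{I}_B)\circ(\mathcal{E}\otimes(\Lambda_0\otimes\Gamma))\circ(\mathcal{N}_1\otimes\mathcal{I}_B)$ and the analogue for $\mathcal{G}(\Lambda^{*})$, I apply the data-processing inequality through the common post-processing $\mathcal{N}_2\otimes\mathcal{I}_B$. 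Setting $\tilde{\rho}=(\mathcal{N}_1\otimes\mathcal{I}_B)(\rho)$ and $\tilde{\Gamma}=\Lambda_0\otimes\Gamma$ (an ARNO by the lemma), the residual term is exactly $D[(\mathcal{E}\otimes\tilde{\Gamma})(\tilde{\rho}),(\Lambda^{*}\otimes\tilde{\Gamma})(\tilde{\rho})]$, one of the quantities controlled by the supremum in $\mathbb{F}_D(\mathcal{E})$. Taking suprema and letting $\Lambda^{*}$ approach optimality gives $\mathbb{F}_D(\mathcal{G}(\mathcal{E}))\le\mathbb{F}_D(\mathcal{E})$.

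For the second bullet I prove the two inequalities separately. For ``$\le$'' the competitor $\Lambda_1\otimes\mathcal{E}_2$, with $\Lambda_1$ near-optimal for $\mathcal{E}_1$, is an ARNO; absorbing $\mathcal{E}_2$ into the ancillary ARNO $\tilde{\Gamma}=\mathcal{E}_2\otimes\Gamma$ turns each distance into a term bounded by $\mathbb{F}_D(\mathcal{E}_1)$, exactly as in the monotonicity step, so $\mathbb{F}_D(\mathcal{E}_1\otimes\mathcal{E}_2)\le\mathbb{F}_D(\mathcal{E}_1)$. For ``$\ge$'' I exhibit a free super-channel $\mathcal{G}_0$ recovering $\mathcal{E}_1$ from $\mathcal{E}_1\otimes\mathcal{E}_2$: append a free state $\tau\in\mathcal{F}_{R}^{\tilde{T}}$ to the $\mathcal{E}_2$-input and discard the $\mathcal{E}_2$-output, i.e.\ $\mathcal{N}_1(\cdot)=(\cdot)\otimes\tau$ and $\mathcal{N}_2=\mathrm{tr}_{A_2}$, both RNOs by (R2) and the definition of $\mathcal{F}_{R}^{\tilde{T}}$. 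Since $\mathrm{tr}[\mathcal{E}_2(\tau)]=1$ one has $\mathcal{G}_0(\mathcal{E}_1\otimes\mathcal{E}_2)=\mathcal{E}_1$, and the first bullet yields $\mathbb{F}_D(\mathcal{E}_1)\le\mathbb{F}_D(\mathcal{E}_1\otimes\mathcal{E}_2)$; combining gives equality.

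For the third bullet, the direction $\mathcal{G}\in\mathcal{M}_{R}^{\tilde{T}}\Rightarrow\mathbb{F}_D(\mathcal{G})=0$ uses the competitor $\Lambda=\mathcal{G}$, so every distance vanishes by (D1). For the converse, $\mathbb{F}_D(\mathcal{G})=0$ produces a sequence $\Lambda_n\in\mathcal{M}_{R}^{\tilde{T}}$ with $\sup_{\Gamma,\rho}D[(\mathcal{G}\otimes\Gamma)(\rho),(\Lambda_n\otimes\Gamma)(\rho)]\to0$; specializing to $\Gamma=\mathcal{I}$ gives $\sup_{\rho}D[(\mathcal{G}\otimes\mathcal{I})(\rho),(\Lambda_n\otimes\mathcal{I})(\rho)]\to0$, which is precisely the channel-distance convergence of the closedness definition, so closedness of $\mathcal{M}_{R}^{\tilde{T}}$ under $D$ forces $\mathcal{G}\in\mathcal{M}_{R}^{\tilde{T}}$ and $\mathbb{F}_D$ vanishes exactly on the free channels. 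I expect two places to require the most care: establishing the ARNO-preservation lemma cleanly, since it silently underwrites the admissibility of every competitor above; and the limiting argument in faithfulness, in particular verifying that the stabilized convergence extracted from the supremum (with $\Gamma=\mathcal{I}$) genuinely matches the notion of closedness adopted for channel sets.
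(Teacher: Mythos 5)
Your proof is correct, and its core engine is the same as the paper's: restrict the infimum in $\mathbb{F}_D$ to competitors induced by the free super-channel, strip the common post-processing via data-processing, and absorb the pre-processing and the ancillary ARNO into the suprema. The main difference is organizational and, in one place, substantive. First, you isolate the ARNO-preservation lemma (identity is ARNO, ARNO$\,\otimes\,$ARNO is ARNO, free super-channels map $\mathcal{M}_R^{\tilde{T}}$ into itself) and prove it; the paper uses exactly these facts but only as unproved set inclusions, e.g.\ $\{\mathcal{M}_2\circ\Pi\circ\mathcal{M}_1\}\subset\mathcal{M}_{\mathcal{H}_A}^{\tilde{T}}$ and $\{\Theta\otimes\mathcal{G}\}\subset\mathcal{M}_{\mathcal{H}_{AS}}^{\tilde{T}}$, so your version is more complete on this point. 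Second, for the inequality $\mathbb{F}_D(\mathcal{E}_1\otimes\mathcal{E}_2)\ge\mathbb{F}_D(\mathcal{E}_1)$ you take a genuinely different route: the paper restricts the supremum to product inputs $\rho_{AB_1}\otimes\theta_{B_2}$ and applies partial-trace data-processing inside the optimization, whereas you build an explicit free recovery super-channel $\mathcal{G}_0$ (append $\tau\in\mathcal{F}_R^{\tilde{T}}$, discard the $\mathcal{E}_2$ output) with $\mathcal{G}_0(\mathcal{E}_1\otimes\mathcal{E}_2)=\mathcal{E}_1$ and invoke the first bullet. Your reduction is cleaner and reuses monotonicity, at the price of assuming $\mathcal{F}_R^{\tilde{T}}\neq\emptyset$ and that $\mathrm{tr}_{B_2}$ and state-appending are RNOs; the paper's direct estimate secretly needs the same ingredients (its induced map $\mathrm{tr}_{B_2}\circ\Lambda((\cdot)\otimes\theta_{B_2})$ must land in $\mathcal{M}_{B_1}^{\tilde{T}}$), so nothing is lost. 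For faithfulness, the paper takes the trivial ancilla $\mathbb{C}$ directly, while you take $\Gamma=\mathcal{I}$ on a nontrivial ancilla and must down-convert the stabilized convergence to the plain channel convergence of the closedness definition; this step does go through (restrict to product inputs and trace out the ancilla, using the standing data-processing assumption on $D$), and you correctly flag it as the delicate point. One shared caveat: like the paper, what you actually prove is that $\mathbb{F}_D$ vanishes exactly on $\mathcal{M}_R^{\tilde{T}}$, not on all of $\mathcal{M}_R$ as the literal statement of property (P3) would demand.
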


The proof of Theorem \ref{th3} is placed in Sec. \ref{app}

Before we introduce the next result, we would recall the max-relative entropy between $\rho$ and $\sigma$ \cite{datta2009}, 
\begin{align*}
	D_{\max}(\rho||\sigma)=\log_2\inf\{\lambda>0|\rho\le \lambda\sigma\},
\end{align*}
where the infimum takes over all the positive numbers, and the base of logarithm is 2.  It satisfies the following properties, 
\begin{itemize}
	\item[1.] [faithfulness]   $D_{\max}(\rho||\sigma)=0\Longleftrightarrow \rho=\sigma.$
	\item[2.]  [data-processing] $D_{\max}(\rho||\sigma)\ge D_{\max}(\Lambda(\rho)||\Lambda(\sigma))$, \hspace{5mm} if $\Lambda\in \mathcal{O}_{R}$.
\end{itemize}
Based on Theorem \ref{th3}, we have $\mathbb{F}_{D_{\max}}(\cdot)$ is a RNG monotone, it is also faithful if $\mathcal{M}_{R}$ is closed under $D_{\max}(\cdot||\cdot)$.

Next we introduce the RNG robustness $\mathbb{R}_L(\cdot)$ of a channel $\mathcal{E}$,
\begin{align}
	\mathbb{R}_L(\mathcal{E})=&{\color{black}-\log_2}\mathbb{L}(\mathcal{E})\nonumber\\
	\mathbb{L}(\mathcal{E})=&\sup\{p|p\mathcal{E}+(1-p)\mathcal{G}\in \mathcal{M}_{{R}}\},\label{roc}
\end{align} 
where $\mathcal{G}$ takes over all channels in $\mathcal{C}_{\mathcal{H}}$. Its $\epsilon$-smooth RNG robustness is defined as follows, 
\begin{align*}
	\mathbb{R}_L^{\epsilon}(\mathcal{E})=\inf_{\mathcal{E}^{'}}\mathbb{R}_L(\mathcal{E}^{'}),
\end{align*} 
where the infimum takes over all the channels $\mathcal{E}^{'}$ with property $\frac{1}{2}||\mathcal{E}^{'}-\mathcal{E}||_{\diamond}\le\epsilon.$ {\color{black} Here $||\cdot||_{\diamond}$ is a commonly used metric to consider the distance between two channels, it is defined as follows, $||\mathcal{F}||_{\diamond}=\sup_{k\in \mathbb{N}}\sup_{||\sigma||_1=1}||\mathcal{F}\otimes\mathcal{I}_k(\sigma)||_1,$ and $||\sigma||_1=\mathrm{tr}\sqrt{\sigma^{\dagger}\sigma}$.}
Then we consider the following task,
\begin{definition}
	Assume $\epsilon\in (0,1)$, $\mathcal{E}\in \mathcal{C}_{\mathcal{H}}$,  if there exists an ensemble of pairs of free reversible channels $\{p_i,\mathcal{U}_i,\mathcal{V}_i\}_{i=1}^k$ (\emph{$i$.$e$.} unitary operations $\mathcal{U}_i,\mathcal{V}_i\in \mathcal{M}_{R}$, $\forall i=1,2,\cdots,k$), $\mathcal{G}\in \mathcal{M}_{R}^{\tilde{T}}$ and an $\Lambda\in \mathcal{M}_{R}$ such that
	\begin{align*}
		\frac{1}{2}||\sum_ip_i \mathcal{U}_i\circ(\mathcal{E}\otimes\mathcal{G})\circ\mathcal{V}_i-\Lambda||_{\diamond}\le \epsilon,
	\end{align*} 
 then we say the above is an $\epsilon$-destruction process of RNG for $\mathcal{E}.$
	
		The $\epsilon$-destruction cost for RNG, $C^{\epsilon}(\mathcal{E})$  is defined as 
	\begin{align*}
		C^{\epsilon}(\mathcal{E})=\min\log_2k,
	\end{align*}
	where the minimization takes over all the $\epsilon$-destruction process of RNG for $\mathcal{E}$.
\end{definition}

The above task is widely addressed in quantum static resources, such as, quantum correlation \cite{groisman2005quantum}, bipartite entanglement \cite{berta2018disentanglement}, asymmetry \cite{wakakuwa2017symmetrizing}, and quantum imaginarity \cite{shi2025}.

\begin{theorem}\label{th4}
	Assume $\mathcal{M}_R$ satisfies the following properties, 
	\begin{itemize}
		\item $\mathcal{M}_R$ is closed under the permuation operation. That is, for a system $\mathcal{H}^{\otimes n}$, $\mathcal{U}_{\pi}(\cdot)=U^{\dagger}_{\pi}\cdot U_{\pi}\in \mathcal{M}_R$, $\pi$ is some permutation in $S_n$.
		\item $\mathcal{M}_R=\mathcal{M}_R^{\tilde{T}}.$
	\end{itemize}
\noindent	
When	 $\mathcal{E}$ is a channel, then for arbitrary $\eta$ and $\epsilon$ such that $0<\eta<\epsilon<1$, we have
\begin{widetext}
	\begin{align}
		\mathbb{R}_L^{\sqrt{\epsilon}}(\mathcal{E})\le	C^{\epsilon}(\mathcal{E})\le \min\left\{\log_2\left[\frac{2(2^{\mathbb{R}_L^{\epsilon-\eta}(\mathcal{E})}-1)e^{\frac{1}{6}}}{\eta^2\pi }+\frac{1}{2^{-\mathbb{R}_L^{\epsilon-\eta}(\mathcal{E})}(1-2^{-\mathbb{R}_L^{\epsilon-\eta}(\mathcal{E})})}+1\right],\mathbb{R}_L^{\epsilon-\eta}(\mathcal{E})+\log_2\frac{1}{\eta^2}\right\}.
	\end{align} 
\end{widetext}

\end{theorem}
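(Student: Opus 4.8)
The plan is to establish the two bounds separately, since the lower and upper bounds on $C^{\epsilon}(\mathcal{E})$ rely on different techniques. The lower bound $\mathbb{L}^{\sqrt{\epsilon(2-\epsilon)}}(\mathcal{E}) \le C^{\epsilon}(\mathcal{E})$ should follow from a monotonicity/converse argument. First I would take any $\epsilon$-destruction process realizing the cost, so that $\frac{1}{2}\|\sum_i p_i\,\mathcal{U}_i\circ(\mathcal{E}\otimes\mathcal{G})\circ\mathcal{V}_i-\Lambda\|_{\diamond}\le\epsilon$ with $k$ terms. The idea is that $\sum_i p_i\,\mathcal{U}_i\circ(\mathcal{E}\otimes\mathcal{G})\circ\mathcal{V}_i$ is a convex mixture of at most $k$ channels, each a free pre/post-composition of $\mathcal{E}\otimes\mathcal{G}$; using that $\mathcal{U}_i,\mathcal{V}_i\in\mathcal{M}_R$, $\mathcal{G}\in\mathcal{M}_R^{\tilde T}=\mathcal{M}_R$ and property (R4), the free structure is preserved, so each $\mathcal{U}_i\circ(\mathcal{E}\otimes\mathcal{G})\circ\mathcal{V}_i$ lies in the same resource class as $\mathcal{E}$ up to free operations. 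I would then relate the number of terms $k$ to the smooth RNO robustness via the robustness' behavior under convex decomposition into $k$ pieces, converting the diamond-norm closeness $\epsilon$ into the smoothing parameter $\sqrt{\epsilon(2-\epsilon)}$ through the standard fidelity--trace-distance (Fuchs--van de Graaf) inequalities, which is where the $\sqrt{\epsilon(2-\epsilon)}$ factor naturally appears.

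For the upper bound, the plan is a direct (achievability) construction: exhibit an explicit $\epsilon$-destruction process using roughly $k$ unitaries with $\log_2 k$ matching the stated right-hand side. The natural strategy is to take a near-optimal channel $\mathcal{E}'$ achieving $\mathbb{L}^{\epsilon-\eta}(\mathcal{E})$, write $p\,\mathcal{E}'+(1-p)\mathcal{G}\in\mathcal{M}_R$ with $p=\mathbb{L}^{\epsilon-\eta}(\mathcal{E})$, and then use a randomized permutation/twirling construction (this is why the hypothesis that $\mathcal{M}_R$ is closed under permutations $\mathcal{U}_\pi$ is imposed) to approximate the free target $\Lambda$ by averaging $\mathcal{U}_\pi\circ(\mathcal{E}\otimes\mathcal{G})\circ\mathcal{U}_\pi^{\dagger}$ over a finite set of permutations. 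The number of permutations needed to bring the average within trace/diamond distance $\eta$ of the fully twirled (free) channel is controlled by a concentration/covering bound, and the explicit expression $\frac{2(1-\mathbb{L}^{\epsilon-\eta})e^{1/6}}{\eta^2\pi\,\mathbb{L}^{\epsilon-\eta}}+\frac{1}{\mathbb{L}^{\epsilon-\eta}(1-\mathbb{L}^{\epsilon-\eta})}+1$ suggests the counting comes from a Chernoff/Hoeffding-type tail estimate together with a Stirling approximation (the $e^{1/6}$ and $\pi$ strongly indicate Stirling's bound $n!\le e^{1/12}\sqrt{2\pi n}\,(n/e)^n$ applied to a binomial coefficient). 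I would set up the derandomization so that the empirical average of $k$ sampled free operations is $\eta$-close in diamond norm to a free channel with high probability, then convert the $(\epsilon-\eta)$-smooth optimality plus the extra $\eta$ of sampling error into the total budget $\epsilon$ via the triangle inequality for the diamond norm.

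The main obstacle I expect is the achievability counting in the upper bound: making the covering/derandomization argument quantitatively yield exactly that closed-form bound on $k$. Converting a high-probability statement ``a random choice of $k$ permutations averages close to the twirl'' into a deterministic existence of $k$ operations requires a careful union/probabilistic-method step, and the precise constants ($e^{1/6}$, the $\pi$, the $1/\eta^2$ scaling) must be tracked through a Stirling-type estimate on the relevant binomial coefficient. I would handle this by fixing the target mixing weight $p=\mathbb{L}^{\epsilon-\eta}(\mathcal{E})$, modeling the selection of the $k$ channels as i.i.d. samples with the optimal decomposition weight, bounding the probability that the empirical average deviates by more than $\eta$ using a Bernstein/Hoeffding inequality whose variance term produces the $1-\mathbb{L}^{\epsilon-\eta}$ and $\mathbb{L}^{\epsilon-\eta}$ factors, and finally invoking Stirling to pass from the tail probability to the explicit bound on $\log_2 k$. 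The lower bound, by contrast, should be comparatively routine once the robustness is shown to decrease under the free super-channel operations of Definition (Free super-channel of RNOs) and the smoothing parameter is matched via Fuchs--van de Graaf.
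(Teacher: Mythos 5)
Your lower-bound sketch tracks the paper's converse argument reasonably well: the paper also starts from an arbitrary $\epsilon$-destruction process with $k$ terms, converts the diamond-norm error $\epsilon$ into the smoothing parameter $\sqrt{\epsilon(2-\epsilon)}$ by the fidelity-based argument it cites from the resource-erasure literature (this is exactly the Fuchs--van de Graaf step you anticipate), writes the free channel as $\mathcal{K}=\sum_i p_i\mathcal{M}_i$ with $\mathcal{M}_i$ completely positive, and then uses the operator inequality $\sum_i p_i\,\mathcal{U}_i^{\dagger}\circ\mathcal{M}_i\circ\mathcal{V}_i^{\dagger}\le\sum_i\mathcal{U}_i^{\dagger}\circ\mathcal{K}\circ\mathcal{V}_i^{\dagger}$ together with convexity of $\mathcal{M}_R$ to bound $\mathbb{L}^{\sqrt{\epsilon(2-\epsilon)}}(\mathcal{E}\otimes\Lambda)$ by $\log_2 k$. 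So that half is essentially aligned, if light on detail.

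The upper bound is where your plan has a genuine gap, in two respects. First, the construction you describe averages $\mathcal{U}_\pi\circ(\mathcal{E}\otimes\mathcal{G})\circ\mathcal{U}_\pi^{\dagger}$ where $\mathcal{G}$ is the robustness partner in $p\mathcal{E}'+(1-p)\mathcal{G}\in\mathcal{M}_R$. That $\mathcal{G}$ is in general \emph{not} free, so $\mathcal{E}\otimes\mathcal{G}$ is not of the admissible form ``$\mathcal{E}$ tensored with an ARNO'' required by the definition of an $\epsilon$-destruction process; moreover, symmetrizing $\mathcal{E}\otimes\mathcal{G}$ over permutations of its two factors just gives $\frac{1}{2}(\mathcal{E}\otimes\mathcal{G}+\mathcal{G}\otimes\mathcal{E})$, which has no reason to be close to any free channel --- tensor-factor symmetrization is not mixing. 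The paper's construction rests on a dilution idea your sketch misses: form the \emph{free} mixture $\mathcal{K}=p\mathcal{E}_m+(1-p)\mathcal{G}\in\mathcal{M}_R=\mathcal{M}_R^{\tilde T}$ (here $\mathcal{E}_m$ is the optimizer achieving $\mathbb{L}^{\epsilon-\eta}(\mathcal{E})$), take the legitimate ARNO ancilla $\mathcal{K}^{\otimes(n-1)}$, and average over the $n$ swap unitaries $\mathcal{U}_i$ that place the single copy of $\mathcal{E}$ in each slot; the target free channel is $\Lambda=\mathcal{K}^{\otimes n}$. Second, the counting in the paper is not probabilistic at all: its Lemma \ref{l3} is a deterministic, exact computation. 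One writes $\mathcal{K}^{\otimes n}-\frac{1}{n}\sum_i\mathcal{K}^{\otimes(i-1)}\otimes\mathcal{E}_m\otimes\mathcal{K}^{\otimes(n-i)}=\frac{1}{p}\sum_k C_n^k p^k(1-p)^{n-k}\left(p-\frac{k}{n}\right)\Delta_k$, passes to Choi operators via Lemma \ref{l2}, telescopes the signed binomial sum exactly, and invokes Stirling's bound on the binomial probability mass function at its mode (Lemma \ref{l1}); this is where the $e^{1/6}$, the $\pi$, the $p(1-p)$ factors and the $1/\eta^2$ scaling come from --- it is the mean absolute deviation of a binomial distribution, not a Chernoff or Hoeffding tail, and no sampling, union bound, or derandomization is needed. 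A triangle inequality with Lemma \ref{l10} then trades $\mathcal{E}$ for $\mathcal{E}_m$ at cost $\epsilon-\eta$, giving the total budget $\epsilon$. You correctly read Stirling off the constants, but you explicitly defer the quantitative step (``the main obstacle''), and the randomized route you outline would still need a valid free ancilla and the dilution into $n-1$ copies of $\mathcal{K}$ to have any chance of producing a channel near $\mathcal{M}_R$; as written, the achievability half is not established.
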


	The proof of Theorem \ref{th4} is placed in Sec. \ref{pth4}.

\section{Applications of the resource theory}\label{s4}
In this section, we will first present the conversion rates of two states under asymptotic RNGs of a generic resource theory, then we consider the capacity of a classical communication task under the dynamical resource of coherence.

\subsection{Rates of state conversion under asymptotic RNGs}

In this subsection, we would consider a state transformation task for a generic resource theory $\langle\mathcal{F}_{R},\mathcal{O}_{R}\rangle$ defined in a system $\mathcal{H}.$ Here we assume the resource theory satisfies the conditions (R1)-(R4), furthermore, $\mathcal{F}_R=\mathcal{F}_R^{\tilde{T}}$.

Next we would 
recall the following quantifier $LR_G(\cdot)$, which is based on the generalized robustness $R_G(\cdot)$, 
\begin{align*}
		LR_G(\rho)=&log_2(1+R_G(\rho)).
\end{align*}
To illustrate the results of this section, we need also the following smoothed and regularized version of the above quantifiers,
\begin{align*}
	R_G^{\epsilon}(\rho)=&\inf_{\tilde{\rho}\in B_{\epsilon}(\rho)}R_G(\tilde{\rho}),\\
	LR_G^{\epsilon}(\rho)=&\inf_{\tilde{\rho}\in B_{\epsilon}(\rho)}LR_G(\tilde{\rho}),\\
	\overline{LR}_G(\rho)=&\sup_{\epsilon_n\rightarrow0}\limsup\limits_{n\rightarrow\infty}\frac{1}{n}LR_G^{\epsilon_n}(\rho^{\otimes n}).
\end{align*}
In the first and second formula, the infimum takes over all the states in the set $B_{\epsilon}(\rho)=\{\tilde{\rho}|\frac{1}{2}||\rho-\tilde{\rho}||_1\le \epsilon\}.$

\begin{definition} Assume $\mathcal{E}:\mathcal{H}_1\rightarrow\mathcal{H}_2$,  $\mathcal{E}$ is an $\epsilon$-RNG if there exists a quantum channel $\mathcal{G}\in \mathcal{M}_{R}$ such that 
\begin{align*}
	\frac{1}{2}||\mathcal{E}-\mathcal{G}||_{\diamond}\le \epsilon.
\end{align*}
\end{definition}

\begin{definition}\label{d2}
Assume $\langle\mathcal{F}_R,\mathcal{M}_R\rangle$ is a static resource which satisfies (R1)-(R4), {\color{black}let $\rho$ and $\phi$ be two states, we say $\phi$ can be converted into $\rho$ with rate, $E_{C,\phi}^{an}(\rho),$ via asymptotical RNGs is defined as follows,}
	\begin{widetext}
	\begin{align*}
		E_{C,\phi}^{an}(\rho)=\inf_{\{k_n,\epsilon_n\}}\{\limsup\limits_{n\rightarrow\infty}\frac{k_n}{n}:\lim\limits_{n\rightarrow\infty}\min_{\Lambda_n\in \epsilon_n-RNG}||\rho^{\otimes n}-\Lambda_n(\phi^{\otimes k_n})||=0,\lim\limits_{n\rightarrow\infty}\epsilon_n=0\},
	\end{align*}
	\end{widetext}
	here the infimum is taken over all sequences of integers $\{k_n\}$ and real numbers $\{\epsilon_n\}$.
\end{definition}
{\color{black}
\begin{theorem}\label{th6}
	Assume $\rho$ is a state in $\mathcal{D}_{\mathcal{H}}$, then we have
	\begin{align}
		E_{C,\phi}^{an}(\rho)\ge\frac{\overline{LR}_G(\rho)}{LR_G(\phi)}.
	\end{align}
\end{theorem}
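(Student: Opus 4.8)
The plan is to use the logarithmic generalized robustness $LR_G$ as a subadditive, RNO-monotone measure and run the standard converse (lower-bound) argument for resource cost, inserting a smoothing step to absorb the $\epsilon_n$-approximations. Three properties carry the argument: monotonicity of $LR_G$ under $\mathcal{M}_R$ (a consequence of Theorem \ref{th5}(ii) together with the fact that $x\mapsto\log_2(1+x)$ is increasing); subadditivity $LR_G(\alpha\otimes\beta)\le LR_G(\alpha)+LR_G(\beta)$ (from submultiplicativity of $1+R_G$, once R5 guarantees that products of free states stay free); and the unit normalization $LR_G(\phi_+^{\otimes k})\le k$, which I would extract from R6--R8 (R6 identifies $\phi_+^{\otimes k}$ as maximally resourceful in $\mathcal{H}^{\otimes k}$, R7 furnishes the free maximally mixed state as an explicit feasible point in $R_G$, and R8 pins down the relevant overlap).

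First I would fix an arbitrary admissible sequence $\{k_n,\epsilon_n,\Lambda_n\}$ realizing $E_C^{an}(\rho)$, with rate $r=\lim_n k_n/n$, $\epsilon_n\to0$, and $\delta_n:=\tfrac12\|\rho^{\otimes n}-\Lambda_n(\phi_+^{\otimes k_n})\|_1\to0$. By definition of an $\epsilon_n$-RNO there is a genuine $\mathcal{G}_n\in\mathcal{M}_R$ with $\tfrac12\|\Lambda_n-\mathcal{G}_n\|_\diamond\le\epsilon_n$; evaluating both channels on the input $\phi_+^{\otimes k_n}$ and applying the triangle inequality places the genuinely free output $\sigma_n:=\mathcal{G}_n(\phi_+^{\otimes k_n})$ inside the smoothing ball $B_{\mu_n}(\rho^{\otimes n})$, where $\mu_n:=\delta_n+\epsilon_n\to0$.

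Next I would chain the monotone. Since $\mathcal{G}_n\in\mathcal{M}_R$, monotonicity and the unit bound give $LR_G(\sigma_n)\le LR_G(\phi_+^{\otimes k_n})\le k_n$, and because $\sigma_n\in B_{\mu_n}(\rho^{\otimes n})$ the definition of the smoothed quantity yields $LR_G^{\mu_n}(\rho^{\otimes n})\le LR_G(\sigma_n)\le k_n$, i.e. $\tfrac1n LR_G^{\mu_n}(\rho^{\otimes n})\le k_n/n$. To pass to the limit, fix $\epsilon>0$: since $\mu_n\to0$ and $\epsilon\mapsto LR_G^{\epsilon}$ is non-increasing, for all large $n$ one has $LR_G^{\epsilon}(\rho^{\otimes n})\le LR_G^{\mu_n}(\rho^{\otimes n})\le k_n$, whence $\liminf_n \tfrac1n LR_G^{\epsilon}(\rho^{\otimes n})\le r$. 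Letting $\epsilon\to0$ reproduces $\overline{LR_G}(\rho)\le r$, and taking the infimum over admissible protocols gives $E_C^{an}(\rho)\ge\overline{LR_G}(\rho)$. Here R9, the closedness of $\mathcal{F}_R$, is what makes the smoothed quantities and their regularization well-defined and lower semicontinuous, so the limits exist and the inequalities survive.

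The step I expect to be the main obstacle is the unit normalization $LR_G(\phi_+^{\otimes k_n})\le k_n$: subadditivity alone only gives $LR_G(\phi_+^{\otimes k_n})\le k_n\,LR_G(\phi_+)$, so one genuinely needs R6--R8 (and the convention that $\phi_+$ is a single unit of resource) to guarantee the per-copy cost is $1$, since otherwise a factor $LR_G(\phi_+)$ degrades the bound. A secondary technical point is the two-metric bookkeeping --- verifying that diamond-norm closeness of $\Lambda_n$ to $\mathcal{G}_n$ transfers to trace-norm closeness of the outputs so that $\sigma_n$ really lands in $B_{\mu_n}(\rho^{\otimes n})$ --- together with confirming that subadditivity of $LR_G$ is licensed despite $\mathcal{F}_R$ not being tensor-closed in general, which is precisely where R5 enters.
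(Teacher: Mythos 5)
Your proposal is correct and takes essentially the same route as the paper's proof: bound the smoothed log-robustness of $\rho^{\otimes n}$ by $k_n$ via monotonicity of $LR_G$ under (near-)RNOs, then regularize; the only organizational difference is that you inline the paper's Lemma~\ref{lRg} and Lemma~\ref{l4} by replacing $\Lambda_n$ with a nearby exact RNO $\mathcal{G}_n$ and merging the two errors into a single smoothing parameter $\mu_n=\delta_n+\epsilon_n$, where the paper instead tracks a $2\kappa$-smoothing. The normalization $LR_G(\phi_+^{\otimes k})\le k$ that you rightly flag as the delicate point is used implicitly (without proof, beyond the subadditivity lemma) in the paper's proof as well, so your treatment is, if anything, more explicit about that assumption.
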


The proof of Theorem \ref{th6} is placed is Sec. \ref{asy}.
\begin{remark}
{	\color{black} In the resource theory of bipartite entanglement, taking $\phi$ in Definition \ref{d2} to be the maximally entangled state yields the task of entanglement dilution, which has been studied since the inception of the quantum information theory  \cite{horodecki2009quantum,bennett1996,brandao2008,brandao2010reversible,brandao2009,lami2023,lami2025,hayashi2025the}. Efficient lower bound on entanglement cost are intimately related to the reversibility of entanglement manipulation.}

\end{remark}

}

\subsection{Classical communication in terms of coherence resource}

In this subsection, we consider the classical communication task under the resource of coherence \cite{wilde2013quantum}. Let $\{\ket{l}\}_{l=1}^d$ be the referenced basis of $\mathcal{H}$, if $\rho\in \mathcal{D}_{\mathcal{H}}$ can be presented as $$\rho=\sum_lp_l\ket{l}\bra{l},$$ then $\rho\in\mathcal{F}_{\mathbf{c}}$ is an incoherent state. $\mathcal{M}_{\mathbf{c}}$ denotes the set of all operations that transform incoherent states into some incoherent state,
\begin{align*}
	\mathcal{M}_{\mathbf{c}}=\{\mathcal{N}|\mathcal{D}\circ\mathcal{N}\circ\mathcal{D}=\mathcal{N}\circ\mathcal{D}\},
\end{align*}
here $\mathcal{D}(\cdot)=\sum_l \ket{l}\bra{l}\cdot\ket{l}\bra{l}$, $\{\ket{l}\}$ is the referenced basis of $\mathcal{H}.$

\begin{figure}[htbp]
	\centering
	\includegraphics[width=0.5\textwidth]{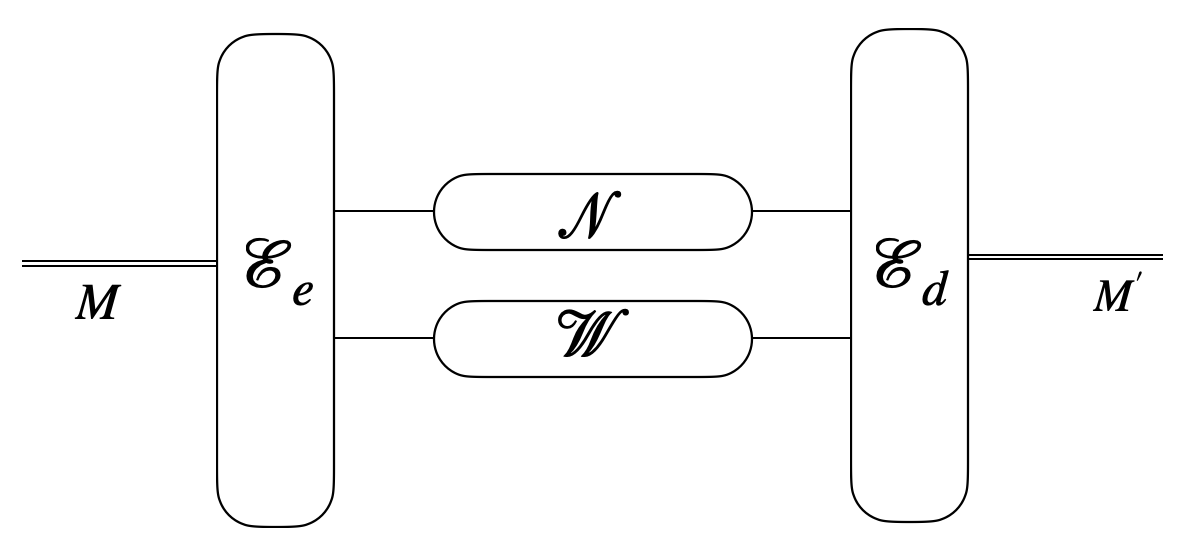}
	\caption{A protocol for classical communication over a quantum
		channel $\mathcal{N}$ with the aid of dynamical resource.}
	\label{fig1}
\end{figure}

Next we consider a classical communication task under a dynamical resource induced by coherence. The task played by Alice and Bob can be described as follows, 
\begin{itemize}
	\item[1.] Alice sends classical information to Bob with the aid of channels \begin{align}
		\Pi\circ\mathcal{N}=\mathcal{E}_d\circ(\mathcal{N}\otimes\mathcal{W})\circ\mathcal{E}_e,\label{pi}
	\end{align}
	here {\color{black}$\Pi(\cdot)=\mathcal{E}_d\circ(\cdot\otimes\mathcal{W})\circ\mathcal{E}_e$ is a superchannel,} $\mathcal{E}_d:\mathcal{H}_B\otimes\mathcal{H}_{ac}\rightarrow{M^{'}}$ and $\mathcal{E}_e:M\rightarrow\mathcal{H}_A\otimes\mathcal{H}_{ac}$ are free operations, and $\mathcal{W}$ is a channel in $\mathcal{M}_{\mathbf{c}}^{\tilde{T}}$.
	\item[2.]  Bob then performs a POVM $$\mathbb{M}=\{\ket{1}\bra{1},\cdots,\ket{m}\bra{m}\}$$ on the state after $\Pi\circ \mathcal{N}$. Here if the index $k\in \{1,2,\cdots,m\}$ happens, Bob concludes the classical message sent was $k$.
\end{itemize}
The average success probability to transmit $m$ messages through $\mathcal{N}$ with the aid of $\Pi$ is defined as 
\begin{align}
	f(\mathcal{N},m)=\sup_{\mathcal{E}_d,\mathcal{E}_e}\frac{\sum_k\mathrm{tr}(\Pi\circ\mathcal{N}(\ket{k}\bra{k})\ket{k}\bra{k})}{m},\label{fcm}
\end{align}
where the supremum of $\mathcal{E}_d:\mathcal{H}_B\otimes\mathcal{H}_{ac}\rightarrow{M^{'}}$ and $\mathcal{E}_e:M\rightarrow\mathcal{H}_A\otimes\mathcal{H}_{ac}$ takes over all the RNGs. To this end, the optimal error $\epsilon(\mathcal{N},m)$ can be written as 
\begin{align*}
	\epsilon(\mathcal{N},m)=1-f(\mathcal{N},m).
\end{align*}
The one-shot classical capacity with an error no more than $\theta\in (0,1)$ is defined as follows,
\begin{align*}
	c_{\theta}(\mathcal{N})=\log_2\sup\{m|\epsilon(\mathcal{N},m)\le \theta\}.
\end{align*}
{\color{black}	\begin{remark}
Actually, as $\mathcal{W}$ in (\ref{pi}) is in $\mathcal{M}_{\mathbf{c}}^{\tilde{T}},$ $\mathcal{I}\otimes\mathcal{W}$ is a RNG. As if $\mathcal{N}_1,\mathcal{N}_2\in \mathcal{M}_{\mathbf{c}},$ then \begin{align}
	&\mathcal{D}_{whole}\circ \mathcal{N}_1\otimes\mathcal{N}_2\circ\mathcal{D}_{whole}\nonumber\\=&\mathcal{D}_1\otimes\mathcal{D}_2\circ\mathcal{N}_1\otimes\mathcal{N}_2\circ\mathcal{D}_1\otimes\mathcal{D}_2]\nonumber\\=&\mathcal{N}_1\otimes\mathcal{N}_2\circ\mathcal{D}_1\otimes\mathcal{D}_2\nonumber\\=&\mathcal{N}_1\otimes\mathcal{N}_2\circ\mathcal{D}_{whole},
\end{align} 
that is, $\mathcal{N}_1\otimes\mathcal{N}_2\in \mathcal{M}_{\mathbf{c}}.$ Based on Remark \ref{re4}, $\mathcal{M}_{\mathbf{c}}=\mathcal{M}_{\mathbf{c}}^{\tilde{T}}$, then a generic superchannel $\Pi$ can be written as  $\Pi(\cdot)=\mathcal{E}_d\circ(\cdot\otimes\mathcal{I})\circ\mathcal{E}_e$, here $\mathcal{E}_d,\mathcal{E}_e\in \mathcal{M}_R.$
\end{remark}

An important class of the assisted communication are the ones where the sending and receiving party are limited to the no-signalling (NS) constraints \cite{PhysRevA.64.052309,PhysRevA.74.012305,ore2012,7115934,7353184,8012535,PhysRevLett.124.120502}. Assume $\Theta:(A^{'}\rightarrow B)\rightarrow(A\rightarrow B^{'})$ is realized by a bipartite NS channel $\Pi:AB\rightarrow A^{'}B^{'}$ satisfying:
\begin{align*}
	\mathrm{tr}_{A^{'}}\Pi(\rho_A\otimes\rho_B)=\mathrm{tr}_{A^{'}}\Pi(\sigma_A\otimes\rho_B),\\
	\mathrm{tr}_{B^{'}}\Pi(\rho_A\otimes\rho_B)=\mathrm{tr}_{B^{'}}\Pi(\rho_A\otimes\sigma_B).
\end{align*}
for any state $\rho_A,\rho_B$, $\sigma_A$ and $\sigma_B.$ Let $Y_{A_iB_iA_oB_o}$ be the Choi-Jamiolkowski matrix of the NS operation $\Pi$, then $Y$ satisfies \cite{7115934,8012535},
\begin{align}
	Y_{A_iB_iA_oB_o}\ge 0, \hspace{3mm} Y_{A_iB_i}=I_{A_iB_i},\label{p1}\\
	Y_{A_iB_iB_o}=\frac{I_{A_i}}{d_{A_i}}\otimes Y_{B_iB_o},\label{p2}\\
	Y_{A_iB_iA_o}=\frac{I_{B_i}}{d_{B_i}}\otimes Y_{A_iA_o}.\label{p3}
\end{align}
In this part, we would address on the classical communication through quantum channels assisted by NS and dynamical coherence nongenerating (DCNG) superchannels.}

{\color{black}

\begin{theorem}\label{th8}
	Assume $\mathcal{N}$ is a quantum channel, then 
	\begin{align}
		c_{\theta}(\mathcal{N})=&\log_2\sup m\\
		\textit{s. t.}\hspace{3mm}& \mathrm{tr} J_{\mathcal{N}}F\ge 1-\theta,\nonumber\\
		&0\le	F\le\rho_A\otimes I_B,\nonumber\\
		&	\mathrm{tr}_AF=\frac{I_B}{m},\nonumber\\
		&\mathrm{tr}\rho_A=1.\nonumber
	\end{align}

\end{theorem}

The proof of Theorem \ref{th8} is placed in \ref{pth8}. 

In the examples below, we use the CVX software \cite{grant} and
QETLAB  \cite{qetlab}
to solve the SDPs. 

\begin{Example}\label{e1}
	For the amplitude dampling channel, $$\mathcal{N}_{\gamma}(\cdot)=E_0\cdot E_0^{\dagger}+E_1\cdot E_1^{\dagger},$$ here $E_0=\ket{0}\bra{0}+\sqrt{1-\gamma}\ket{1}\bra{1}$ and $E_1=\sqrt{\gamma}\ket{0}\bra{1}.$ 
	
	In Fig. \ref{fig2}, we show the one-shot classical capacity of an amplitude dampling channel with an error $\theta$ when $p\in [0.5,0.98)$. The red line is with an error $\theta=0.01,$ and the blue line is with an error $\theta=0.004.$
	
	\begin{figure*}[htbp]
		\centering
		\includegraphics[width=0.5\textwidth]{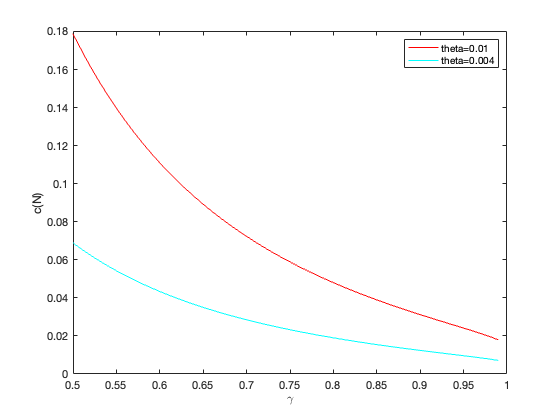}
		\large\caption{The one-shot classical capacity of an amplitude dampling channel. Here the red line is the one-shot classical capacity with an error $\theta=0.01,$ the blue line is the one-shot classical capacity with an error $\theta=0.004.$}
		\label{fig2}
	\end{figure*}
	\end{Example}

\begin{Example}\label{e2}
	For the BB84 channel \cite{PhysRevLett.100.170502}, $$\mathcal{N}_{\gamma}(\rho)=(1-\gamma)^2\rho+(\gamma-\gamma^2)(X\rho X+Z\rho Z)+\gamma^2Y\rho Y,$$ here 
	\begin{align*}
		X=\begin{pmatrix}
			0&1\\
			1&0
		\end{pmatrix},\hspace{3mm}Y=\begin{pmatrix}
		0&-i\\
		i&0
		\end{pmatrix},\hspace{3mm}Z=\begin{pmatrix}
		1&0\\
		0&-1
		\end{pmatrix}.
	\end{align*}
	
	In Fig. \ref{fig3}, we show the one-shot classical capacity of the BB84 channel with an error $\theta$. The capacity is not monotonic and reaches a minimum at $\gamma=0.5$ around with an error $\theta=0.03$ and $\theta=0.05$.
	
	\begin{figure*}[htbp]
		\centering
		\includegraphics[width=0.5\textwidth]{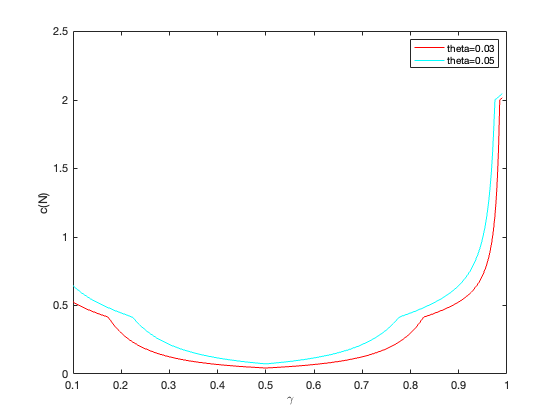}
		\large\caption{The one-shot classical capacity of the BB84 channel.}
		\label{fig3}
	\end{figure*}
\end{Example}

}
\section{Conclusion}
In this manuscript, we first considered RNGs on a generic static resource and presented a sufficient condition for the transformation between two states via a RNG within a generic resource theory. Next, we constructed a dynamical resource theory for RNGs corresponding to a given static resource theory, and we axiomatically proposed a method to quantify channels in the dynamical resource theory. We also studied the task of erasing a quantum channel under this resource theory. Finally, we demonstrated applications of RNGs in two quantum information tasks. Specifically, we presented lower bounds on the rates of  state conversion under asymptotic RNGs in the context of a generic resource theory, and we provided a bound on the classical capacity under the quantum dynamical resources of coherence. Given the fundamental importance of RNGs, we believe that our results will be valuable for further studies in quantum resource theories.

  \section{Acknowledgement}
X. S. was supported by the National Natural Science Foundation of China (Grant No. 12301580).

\section{Data availability}
The code that supports Example \ref{e1} and Example \ref{e2} of
this article are openly available \cite{sx}.  
\bibliographystyle{IEEEtran}
\bibliography{ref}
\setcounter{equation}{0}

\clearpage
\onecolumngrid
\setcounter{page}{1}
\setcounter{equation}{0}
\setcounter{figure}{0}
\renewcommand{\theequation}{S\arabic{equation}}
\renewcommand{\thefigure}{S\arabic{figure}}

\setcounter{secnumdepth}{1}

\section{Appendix}\label{app}

\subsection{Proof of Theorem \ref{th5}} \label{pth5}
Theorem \ref{th5}:\emph{	Assume $\langle\mathcal{F}_{R},\mathcal{M}_R\rangle$ satisfies (R1)-(R4), then the quantifiers $R_G(\cdot)$ and $\mathscr{R}_G(\cdot)$ for $\langle\mathcal{F}_R,\mathcal{M}_R\rangle$ satisfy the following properties,
	\begin{itemize}
		\item[(i)] For any mixed state $\rho$, $R_G(\rho),\mathscr{R}_G(\rho)\in[0,\infty).$
		\item[(ii)] $R_G(\cdot)$ and $\mathscr{R}_G(\cdot)$ are valid quantifiers in terms of $\langle\mathcal{F}_R,\mathcal{M}_R\rangle$.
		\item[(iii)] $R_G(\cdot)$ and $\mathscr{R}_G(\cdot)$ satisfy the properties O3 and O4.
		\item[(iv)] Assume $\rho$ is a state, $\sigma$ is the optimal free state in terms of $\mathcal{R}_G(\cdot)$ for $\rho$. Let $r=\mathcal{R}_G(\rho)$, if $s>r$,
		\begin{align*}
			\frac{\rho+s\sigma}{1+s}\in\mathcal{F}_{R}.
		\end{align*}
		Here $\mathcal{R}_G(\cdot)$ can be $\mathscr{R}_G(\cdot)$ or $R_G(\cdot).$
\end{itemize}}
\begin{proof} Here we only prove $\mathscr{R}_G(\cdot)$, and the proof of $R_G(\cdot)$ is similar.
	\begin{itemize}
		\item[(i)] As the resource is convex and compact, we could prove $\mathscr{R}_G(\rho)\in [0,\infty).$
		\item[(ii)] Here we only prove the property O2. Assume $\sigma$ is the optimal free state in terms of $\mathscr{R}_G(\cdot)$ for $\rho,$ then we have 
		\begin{align*}
			\frac{\rho+r\sigma}{1+r}\in \mathcal{F}_{R},\\
			\frac{\Lambda(\rho)+r\Lambda(\sigma)}{1+r}\in \mathcal{F}_{R},
		\end{align*}
		the second formula is due to that $\Lambda\in \mathcal{M}_R$. Due to the definition of $\mathscr{R}_G(\cdot)$, we have $\mathscr{R}_G(\Lambda(\rho))\le\mathscr{R}_G(\rho).$
		\item[(iii)] Here we only prove $\mathscr{R}_G(\cdot)$ satisfies the property O4. Assume $\mathscr{R}_G(\rho_1)=r_1,\mathscr{R}_G(\rho_1)=r_1$ and $\sigma_1$ and $\sigma_2$ are the optimal free states for $\rho_1$ and $\rho_2$ in terms of $\mathscr{R}_G(\cdot)$, respectively, 
		\begin{align*}
			\frac{p\rho_1+pr_1\sigma_1+(1-p)\rho_2+(1-p)r_2\sigma_2}{p(1+r_1)+(1-p)(1+r_2)}\in\mathcal{F}_{R},
		\end{align*}
		due to the definition of $\mathscr{R}_G(\cdot)$, we have $\mathscr{R}_G(p\rho_1+(1-p)\rho_2)\le p\mathscr{R}_G(\rho_1)+(1-p)\mathscr{R}_G(\rho_2).$
		\item[(iv)] 	When $\rho$ is a state, $\sigma$ is the optimal in terms of $\mathscr{R}_G(\cdot)$ for $\rho$. Let $r=\mathscr{R}_G(\rho)$, when $s>r$, 
		\begin{align*}
			\frac{\rho+s\sigma}{1+s}=\frac{1+r}{1+s}\times\frac{\rho+r\sigma}{1+r}+(1-\frac{1+r}{s+1})\sigma,
		\end{align*}
		as $\frac{\rho+r\sigma}{1+r}$ and $\sigma$ are free, and the set of free states is convex, $\frac{\rho+s\sigma}{1+s}$ is free.
	\end{itemize}
\end{proof}

\subsection{Proof of Theorem \ref{th3}}\label{pth3}
Theorem \ref{th3}:\emph{	Assume ${D}$ is a distance between two states, then
	\begin{itemize}
		\item if ${D}$ satisfies the data-processing property, 
		\begin{align*}
			D(\Lambda(\rho),\Lambda(\sigma))\le D(\rho,\sigma),\hspace{3mm}\forall \rho,\sigma\in \mathcal{D}_{\mathcal{H}}, \Lambda\in \mathcal{C}_{\mathcal{H}}.
		\end{align*}
		Then $\mathbb{F}_{D}$ is a RNG monotone. 
		\item if $\mathcal{E}_2\in \mathcal{M}_{R}^{\tilde{T}}$, then
		\begin{align*}
			\mathbb{F}_{D}(\mathcal{E}_1\otimes\mathcal{E}_2)=\mathbb{F}_{D}(\mathcal{E}_1)
		\end{align*}
		\item if $\mathcal{M}_{R}$ is closed under $D(\cdot,\cdot)$, $\mathbb{F}_D(\cdot)$ is faithful.
	\end{itemize}  }

\begin{proof}
	In this proof, we use $\mathcal{M}_{\mathcal{H}}^{\tilde{T}}$ to clarify the Hilbert space $\mathcal{H}$ that operations acts.
	Due to the definition of $D(\cdot,\cdot)$, the property P1 is satisfied by $\mathbb{F}_{D}$. Next we show the property P2, for a free super-channel $\mathcal{K}(\mathcal{E})=\mathcal{M}_2\circ(\mathcal{E}\otimes\mathcal{G})\circ\mathcal{M}_1$, $\mathcal{M}_1\in \mathcal{C}_{\mathcal{D}_A\rightarrow \mathcal{D}_{AS}}$ and $\mathcal{M}_2\in \mathcal{C}_{\mathcal{D}_{AS}\rightarrow\mathcal{D}_A}$ are RNGs, and $\mathcal{G}\in \mathcal{M}_{\mathcal{H}_S}^{\tilde{T}}$, then

		\begin{align*}
			&\mathbb{F}_D(\mathcal{K}(\mathcal{E}))\\
			=&\inf_{\Lambda\in \mathcal{M}_{\mathcal{H}_A}}\sup_{\rho_{AB},\Gamma\in \mathcal{M}_{\mathcal{H}_B}^{\tilde{T}}}D[(\mathcal{K}(\mathcal{E})\otimes\Gamma)(\rho_{AB}),(\Lambda\otimes\Gamma)(\rho_{AB})]\\
			\le&\inf_{\Pi\in \mathcal{M}_{\mathcal{H}_{AS}}}\sup_{\rho_{AB},\Gamma\in \mathcal{M}_{\mathcal{H}_B}^{\tilde{T}}}D[(\mathcal{K}(\mathcal{E})\otimes\Gamma)(\rho_{AB}),(\mathcal{M}_2\circ \Pi\circ\mathcal{M}_1\otimes\Gamma)(\rho_{AB})]\\
			\le &\inf_{\Pi\in \mathcal{M}_{\mathcal{H}_{AS}}}\sup_{\substack{\rho_{AB},\mathcal{G}\in\mathcal{M}^{\tilde{T}}_{\mathcal{H}_S},\\\Gamma\in \mathcal{M}_{\mathcal{H}^{\tilde{T}}_B}}}D[((\mathcal{E}\otimes\mathcal{G})\circ\mathcal{M}_1\otimes\Gamma)(\rho_{AB}),( \Pi\circ\mathcal{M}_1\otimes\Gamma)(\rho_{AB})]\\
			\le& \inf_{\Pi\in \mathcal{M}_{\mathcal{H}_{AS}}}\sup_{\substack{\rho_{ASB},\mathcal{G}\in\mathcal{M}^{\tilde{T}}_{\mathcal{H}_S},\\\Gamma\in \mathcal{M}_{\mathcal{H}_B}^{\tilde{T}}}}D[(\mathcal{E}\otimes\mathcal{G}\otimes \Gamma)(\rho_{ASB}),(\Pi\otimes\Gamma)(\rho_{ASB})]\\
			\le &\inf_{\Theta\in \mathcal{M}_{\mathcal{H}_{A}}}\sup_{\substack{\rho_{ASB},\mathcal{G}\in\mathcal{M}^{\tilde{T}}_{\mathcal{H}_S},\\\Gamma\in \mathcal{M}_{\mathcal{H}_B}^{\tilde{T}}}}D[(\mathcal{E}\otimes\mathcal{G}\otimes \Gamma)(\rho_{ASB}),(\Theta\otimes\mathcal{G}\otimes\Gamma)(\rho_{ASB})]\\
			\le&\inf_{\Theta\in \mathcal{M}_{\mathcal{H}_{A}}}\sup_{\rho_{AB},\Xi\in \mathcal{M}_{\mathcal{H}_B}^{\tilde{T}}}D[(\mathcal{E}\otimes \Xi)(\rho_{AB}),(\Theta\otimes\Xi)(\rho_{AB})]\\
			=&\mathbb{F}_D(\mathcal{E}),
		\end{align*}
	here the first inequality is due to that $\{\mathcal{M}_2\circ\Pi\circ\mathcal{M}_1|\Pi\in \mathcal{M}_{\mathcal{H}_{AS}}\}\subset \{\Lambda|\Lambda\in\mathcal{M}_{\mathcal{H}_A}\}$, the second inequality is due to the data-processing  property of $D(\cdot,\cdot)$, the fourth inequality is due to that $\{\Theta\otimes\mathcal{G}|\Theta\in \mathcal{M}_{\mathcal{H}_A}\}\subset\{\Pi|\Pi\in\mathcal{M}_{\mathcal{H}_{AS}}\}$, the last inequality is due to that $\mathcal{G}\otimes\Gamma$ in the penultimate is in $\mathcal{M}_R^{\tilde{T}}$.
	
	Next we show the following property, if $\mathcal{E}_1\in \mathcal{M}_{R}$ and $\mathcal{E}_2\in \mathcal{M}_{R}^{\tilde{T}}$, then

		\begin{align*}
			&\mathbb{F}_{D}(\mathcal{E}_1\otimes\mathcal{E}_2)\\
			=&\inf_{\Lambda\in \mathcal{M}_{\mathcal{H}_{AB_1}}}\sup_{\substack{\rho_{AB_1B_2},\\\Gamma\in \mathcal{M}_{\mathcal{H}_{B_2}}^{\tilde{T}}}}D[(\mathcal{E}_1\otimes\mathcal{E}_2\otimes\Gamma)(\rho_{AB_1B_2}),(\Lambda\otimes\Gamma)(\rho_{AB_1B_2})]\\
			\ge&\inf_{\Lambda\in \mathcal{M}_{\mathcal{H}_{AB_1}}}\sup_{\substack{\rho_{AB_2}\otimes\theta_{B_1},\\\Gamma\in \mathcal{M}_{\mathcal{H}_{B_2}}^{\tilde{T}}}}D[(\mathcal{E}_1\otimes\mathcal{E}_2\otimes\Gamma)(\rho_{AB_2}\otimes\theta_{B_1}),(\Lambda\otimes\Gamma)(\rho_{AB_2}\otimes\theta_{B_1})]\\
			\ge &\inf_{\Lambda\in \mathcal{M}_{\mathcal{H}_{AB_1}}}\sup_{\substack{\rho_{AB_2}\otimes\theta_{B_1},\\\Gamma\in \mathcal{M}_{\mathcal{H}_{B_2}}^{\tilde{T}}}}D[\mathrm{tr}_{B_1}(\mathcal{E}_1\otimes\mathcal{E}_2\otimes\Gamma)(\rho_{AB_2}\otimes\theta_{B_1}),\mathrm{tr}_{B_1}(\Lambda\otimes\Gamma)(\rho_{AB_2}\otimes\theta_{B_1})]\\
			\ge&\inf_{\Xi\in \mathcal{M}_{\mathcal{H}_{A}}}\sup_{\substack{\rho_{AB_2},\\\Gamma\in \mathcal{M}_{\mathcal{H}_{B_2}}^{\tilde{T}}}}D[(\mathcal{E}_1\otimes\Gamma)(\rho_{AB_2}),(\Xi\otimes\Gamma)(\rho_{AB_2 })]\\=&\mathbb{F}_D(\mathcal{E}_1),
		\end{align*} 
	here the first inequality is due to $\{\rho_{AB_1B_2}|\rho_{AB_1B_2}\in\mathcal{D}_{AB_1B_2}\}\supset\{\rho_{AB_1}\otimes\theta_{B_2}|\rho_{AB_1}\in\mathcal{D}_{AB_1},\theta_{B_2}\in\mathcal{D}_{{B_2}}\}$, the second inequality is due to that the partial trace is in $\mathcal{M}_{R}^{\tilde{T}}$, the last inequality is due to the assumption (R3). Then we show the other side of the inequality, 

		\begin{align*}
			&\mathbb{F}_{D}(\mathcal{E}_1\otimes\mathcal{E}_2)\\
			=&\inf_{\Lambda\in \mathcal{M}_{\mathcal{H}_{AB_1}}}\sup_{\rho_{AB_1B_2},\Gamma\in \mathcal{M}_{\mathcal{H}_{B_2}}^{\tilde{T}}}D[(\mathcal{E}_1\otimes\mathcal{E}_2\otimes\Gamma)(\rho_{AB_1B_2}),(\Lambda\otimes\Gamma)(\rho_{AB_1B_2})]\\
			\le& \inf_{\Lambda\in \mathcal{M}_{\mathcal{H}_{A}}}\sup_{\rho_{AB_1B_2},\Gamma\in \mathcal{M}_{\mathcal{H}_{B_2}}^{\tilde{T}}}D[(\mathcal{E}_1\otimes\mathcal{E}_2\otimes\Gamma)(\rho_{AB_1B_2}),(\Lambda\otimes\mathcal{E}_2\otimes\Gamma)(\rho_{AB_1B_2})]\\
			\le& \inf_{\Lambda\in \mathcal{M}_{\mathcal{H}_{A}}}\sup_{\rho_{AB_1},\Gamma\in \mathcal{M}_{\mathcal{H}_{B}}^{\tilde{T}}}D[(\mathcal{E}_1\otimes\Gamma)(\rho_{AB}),(\Lambda\otimes\Gamma)(\rho_{AB})]\\
			=&\mathbb{F}_D(\mathcal{E}_1).
		\end{align*}
	Here the last inequality is due to that $\mathcal{E}_2\otimes\Gamma\in \mathcal{M}^{\tilde{T}}.$ Hence, we finish the proof that $\mathbb{F}_D(\mathcal{E}_1\otimes\mathcal{E}_2)=\mathbb{F}_D(\mathcal{E}_1)$.

	At last, if $\mathcal{E}\in\mathcal{C}_{\mathcal{H}}$ satisfies $\mathbb{F}_D(\mathcal{E})=0,$ due to the definition of $\mathbb{F}_D(\cdot)$, we always could choose the ancillary system $\mathbb{C},$ there exists an operation $\Lambda_k\in \mathcal{M}_{R}$ such that 
	\begin{align*}
		\lim_k\sup_{\rho} D(\mathcal{E}(\rho),\Lambda_k(\rho))=0,
	\end{align*}
	as $\mathcal{M}_{R}$ is closed under the $D(\cdot,\cdot)$, then $\mathbb{F}_D(\cdot)$ is faithful.
\end{proof}

\subsection{Proof of Theorem \ref{th4}}\label{pth4}
	\begin{lemma}\label{l2}\cite{john2009}
	Assume $\Psi_1$ and $\Psi_2$ are two quantum channels, 
	\begin{align*}
		||\Psi_1-\Psi_2||_{\diamond}=\sup_{||A||_2=1}||(A\otimes \mathbb{I})(J_{\Psi_1}-J_{\Psi_2})||_1,
	\end{align*}
	here $J_{\Psi_1}$ and $J_{\Psi_2}$ are the Choi operators of $\Psi_1$ and $\Psi_2$, resepectively, and $||M||_1=\mathrm{tr}\sqrt{M^{\dagger}M}$.
\end{lemma}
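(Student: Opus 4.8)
The plan is to derive this identity as the Choi--Jamio\l{}kowski characterization of the diamond norm, reducing the supremum over all reference systems and input states in the definition of $\|\cdot\|_\diamond$ to the single operator parameter $A$. Write $\Phi := \Psi_1-\Psi_2$, so that by linearity of the Choi map $J_\Phi = J_{\Psi_1}-J_{\Psi_2}$, and recall $\|\Phi\|_\diamond = \sup_\rho \|(\Phi\otimes\mathrm{id}_R)(\rho)\|_1$, the supremum taken over density operators $\rho$ on $\mathcal{H}\otimes\mathcal{H}_R$ and over all reference dimensions. First I would argue that it suffices to restrict to pure inputs on a reference of dimension $\dim\mathcal{H}_R = \dim\mathcal{H} =: d$: convexity of the trace norm together with linearity of $\Phi\otimes\mathrm{id}_R$ forces the supremum onto the extreme points of the state space, i.e. pure states, and a Schmidt-rank argument caps the useful reference dimension at $d$.

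Next I would parametrize these pure inputs through the maximally entangled vector $|\Omega\rangle = \sum_i |i\rangle|i\rangle$. Every unit vector $|\psi\rangle\in\mathcal{H}\otimes\mathcal{H}_R$ with $\dim\mathcal{H}_R = d$ equals $(\mathbb{I}\otimes A)|\Omega\rangle$ for a linear operator $A$ on the reference (equivalently $(A^{T}\otimes\mathbb{I})|\Omega\rangle$, via $(M\otimes\mathbb{I})|\Omega\rangle = (\mathbb{I}\otimes M^{T})|\Omega\rangle$), and the normalization $\langle\psi|\psi\rangle = \mathrm{tr}(A^{\dagger}A) = \|A\|_2^2$ is precisely the constraint $\|A\|_2 = 1$. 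Since $\Phi$ acts only on the first tensor factor while $\mathrm{id}_R$ acts trivially on the reference carrying $A$, the operator passes through the map, giving
\begin{align*}
(\Phi\otimes\mathrm{id}_R)(|\psi\rangle\langle\psi|) = (\mathbb{I}\otimes A)\,(\Phi\otimes\mathrm{id}_R)(|\Omega\rangle\langle\Omega|)\,(\mathbb{I}\otimes A^{\dagger}).
\end{align*}
Identifying $(\Phi\otimes\mathrm{id}_R)(|\Omega\rangle\langle\Omega|) = J_\Phi = J_{\Psi_1}-J_{\Psi_2}$ as the Choi operator and taking trace norms yields $\|(\Phi\otimes\mathrm{id}_R)(|\psi\rangle\langle\psi|)\|_1 = \|(\mathbb{I}\otimes A)(J_{\Psi_1}-J_{\Psi_2})(\mathbb{I}\otimes A^{\dagger})\|_1$; taking the supremum over $\|A\|_2=1$ then reproduces the stated formula, up to the convention fixing which tensor factor of $J$ is the ancilla (the paper's $(A\otimes\mathbb{I})$ corresponds to placing the operator on that factor and conjugating).

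The main obstacle, and the only step beyond the Choi bookkeeping, is the reduction of the reference system; everything else is algebra. The inequality $\ge$ is immediate, since each $A$ with $\|A\|_2=1$ yields an admissible normalized input and hence a lower bound on $\|\Phi\|_\diamond$. For $\le$ I must rule out that a larger reference or a mixed input could do strictly better: convexity of the trace norm restricts the optimization to pure states, and for a pure $|\psi\rangle$ on $\mathcal{H}\otimes\mathcal{H}_R$ I would Schmidt-decompose across the input/reference cut and restrict $\mathcal{H}_R$ to the support of the reduced state, whose dimension is at most $d$; an isometry on the reference then brings $|\psi\rangle$ into the form $(\mathbb{I}\otimes A)|\Omega\rangle$ without altering $\|(\Phi\otimes\mathrm{id}_R)(|\psi\rangle\langle\psi|)\|_1$, because the identity channel commutes with that isometry. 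As this is the standard semidefinite characterization of the completely bounded trace norm, I would attribute the dimension-reduction step to \cite{john2009} and carry out the Choi computation explicitly.
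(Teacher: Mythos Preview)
The paper does not prove this lemma; it is cited from \cite{john2009} without argument, so there is no paper proof to compare against. Your derivation is the standard Choi--Jamio\l{}kowski argument and correctly establishes the two-sided identity
\[
\|\Psi_1-\Psi_2\|_\diamond \;=\; \sup_{\|A\|_2=1}\bigl\|(A\otimes\mathbb{I})\,J_\Phi\,(A^{\dagger}\otimes\mathbb{I})\bigr\|_1.
\]

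The gap is in your final step, where you dismiss the discrepancy with the paper's one-sided expression $\|(A\otimes\mathbb{I})J_\Phi\|_1$ as a matter of ``convention'' (tensor factor placement plus conjugation). The tensor-factor placement is indeed convention, but one-sided multiplication versus two-sided conjugation is not: the two suprema differ in general. For instance, with $\Psi_1=\mathrm{id}$ and $\Psi_2$ the fully depolarizing qubit channel one has $\|\Psi_1-\Psi_2\|_\diamond=3/2$, whereas the choice $A=I/\sqrt{2}$ in the one-sided form gives $\tfrac{1}{\sqrt{2}}\|J_\Phi\|_1=3/\sqrt{2}>3/2$. What does hold is the inequality
\[
\|\Psi_1-\Psi_2\|_\diamond \;\le\; \sup_{\|A\|_2=1}\bigl\|(A\otimes\mathbb{I})\,J_\Phi\bigr\|_1,
\]
which follows from your two-sided identity via $\|XJX^{\dagger}\|_1\le\|X\|_\infty\|XJ\|_1$ (using $J_\Phi=J_\Phi^{\dagger}$) together with $\|A\|_\infty\le\|A\|_2=1$. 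That inequality is precisely what the paper actually uses downstream in the proof of Lemma~\ref{l3} (the H\"older step). So your argument is sound, but the conclusion should be recorded either as the two-sided equality you derived or as the one-sided upper bound; asserting equality for the one-sided form, as the lemma is literally written, is not justified by your proof and is in fact false.
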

\begin{lemma}\label{l10}
		Assume $\mathcal{Q}$, $\mathcal{N}$ and $\mathcal{N}^{'}$ are three quantum channels, and $\frac{1}{2}||\mathcal{N}-\mathcal{N}^{'}||_{\diamond}\le \delta$, then
		\begin{align}
			\frac{1}{2}||\mathcal{Q}\otimes(\mathcal{N}-\mathcal{N}^{'})||_{\diamond}\le \delta.
		\end{align}
\end{lemma}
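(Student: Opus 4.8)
The plan is to reduce the claim to the single inequality $||\mathcal{Q}\otimes\Delta||_{\diamond}\le||\Delta||_{\diamond}$, where $\Delta:=\mathcal{N}-\mathcal{N}'$; combined with the hypothesis $\frac{1}{2}||\Delta||_{\diamond}\le\delta$ this immediately yields $\frac{1}{2}||\mathcal{Q}\otimes\Delta||_{\diamond}\le\delta$. The content is therefore that tensoring the Hermiticity-preserving map $\Delta$ with a genuine channel $\mathcal{Q}$ cannot increase its diamond norm.

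First I would unfold the definition $||\mathcal{K}||_{\diamond}=\sup_{\rho}||(\mathcal{K}\otimes\mathcal{I})(\rho)||_1$ for $\mathcal{K}=\mathcal{Q}\otimes\Delta$, taking the input state $\rho$ on $\mathcal{H}_{Q}\otimes\mathcal{H}_{N}\otimes\mathcal{H}_{R}$, where $\mathcal{H}_Q$ is the input of $\mathcal{Q}$, $\mathcal{H}_N$ the input of $\Delta$, and $\mathcal{H}_R$ the reference. Since $\mathcal{Q}$ and $\Delta$ act on disjoint tensor factors, the relevant map factors as $\mathcal{Q}\otimes\Delta\otimes\mathcal{I}=(\mathcal{Q}\otimes\mathcal{I})\circ(\mathcal{I}\otimes\Delta\otimes\mathcal{I})$, with $\mathcal{Q}\otimes\mathcal{I}$ acting trivially on the $\mathcal{H}_N$-output and on $\mathcal{H}_R$.

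The key step is the trace-norm contractivity of quantum channels. Setting $\tau=(\mathcal{I}\otimes\Delta\otimes\mathcal{I})(\rho)$, which is Hermitian because $\rho$ is Hermitian and $\Delta$ is Hermiticity-preserving (a difference of two CPTP maps), and using that $\mathcal{Q}\otimes\mathcal{I}$ is again CPTP, I obtain $||(\mathcal{Q}\otimes\mathcal{I})(\tau)||_1\le||\tau||_1$. This follows from the Jordan decomposition $\tau=\tau_+-\tau_-$ into positive and negative parts, together with the positivity and trace preservation of $\mathcal{Q}\otimes\mathcal{I}$. Then I would rewrite $||\tau||_1=||(\Delta\otimes\mathcal{I}_{\tilde R})(\rho)||_1$ by folding the spectator factor $\mathcal{H}_Q$ into an enlarged reference $\mathcal{H}_{\tilde R}=\mathcal{H}_Q\otimes\mathcal{H}_R$; since the supremum defining $||\cdot||_{\diamond}$ ranges over reference systems of arbitrary dimension, this quantity is at most $||\Delta||_{\diamond}$. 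Taking the supremum over $\rho$ gives $||\mathcal{Q}\otimes\Delta||_{\diamond}\le||\Delta||_{\diamond}$, which completes the argument.

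The main obstacle is purely organizational rather than quantitative: one must confirm that the reference system in the definition of $||\cdot||_{\diamond}$ may be enlarged at will, so that absorbing the input of $\mathcal{Q}$ into it is legitimate, and one must apply contractivity to the Hermitian operator $\tau$ rather than to a normalized state. No delicate estimate is needed; the whole proof rests on the contractivity of channels under $||\cdot||_1$ together with the stabilized nature of the diamond norm. As an alternative, one could instead invoke multiplicativity of the completely bounded norm, $||\mathcal{Q}\otimes\Delta||_{\diamond}=||\mathcal{Q}||_{\diamond}\,||\Delta||_{\diamond}$ with $||\mathcal{Q}||_{\diamond}=1$, but the direct contraction argument above has the advantage of being self-contained.
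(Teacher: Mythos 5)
Your proof is correct and takes essentially the same route as the paper's: factor $\mathcal{Q}\otimes\Delta\otimes\mathcal{I}$ (with $\Delta=\mathcal{N}-\mathcal{N}'$) as the channel $\mathcal{Q}\otimes\mathcal{I}$ applied after $\mathcal{I}\otimes\Delta\otimes\mathcal{I}$, contract the trace norm under the CPTP part, and absorb the input of $\mathcal{Q}$ into the reference system so that the remaining quantity is bounded by $||\Delta||_{\diamond}\le 2\delta$. The only difference is in how the contraction is justified --- the paper writes $\mathcal{Q}$ via a Stinespring dilation and invokes unitary invariance plus partial-trace monotonicity, while you prove trace-norm contractivity on Hermitian operators directly by Jordan decomposition --- and you are more explicit about the reference-enlargement step, which the paper leaves implicit.
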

\begin{proof}
	Assume $\mathcal{Q}(\cdot)$ can be written as $\mathcal{Q}(\sigma)=\mathrm{tr}_C[U(\sigma\otimes\gamma)U^{\dagger}]$,  $\gamma$ is a state, then
	\begin{align*}
			\frac{1}{2}||\mathcal{Q}\otimes(\mathcal{N}-\mathcal{N}^{'})||_{\diamond}=\max_{\rho}
		&\frac{1}{2}||(\mathcal{I}\otimes\mathcal{Q}\otimes\mathcal{I})[ (\mathcal{I}\otimes\mathcal{I}\otimes(\mathcal{N}-\mathcal{N}^{'}))(\rho)]||_{1}\\
		\le&\frac{1}{2}|| U(\varphi\otimes\gamma)U^{\dagger} ||_1\\
		=&\frac{1}{2}||\varphi\otimes\gamma||_1\le \delta,
	\end{align*}
	here $\varphi=\mathcal{I}\otimes\mathcal{I}\otimes(\mathcal{N}-\mathcal{N}^{'})(\rho)$. And the first inequality is due to the monotonicity under trace-preserving map \cite{Nielsen_Chuang_2010}. The last inequality is due to $	\frac{1}{2}||\mathcal{N}-\mathcal{N}^{'}||_{\diamond}\le \delta.$
\end{proof}
\begin{lemma}\label{l1}
	Assume $\mathsf{B}\sim B(n,p)$ is a binomially distributed random variable, its probability mass function is $f(n,p)=C_n^k p^k(1-p)^{n-k}$, when $k\le \floor{np},$
	\begin{align}
		f(n,p)\le\frac{e^{\frac{1}{12}}}{\sqrt{2\pi np(1-p)-2\pi}}.
	\end{align}
\end{lemma}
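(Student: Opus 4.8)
The plan is to reduce the uniform estimate over all $k\le\floor{np}$ to a single bound at the mode of the distribution, and then to control that bound with an explicit (Robbins) form of Stirling's formula. Write $f(k)=\binom{n}{k}p^{k}(1-p)^{n-k}$ for the mass at $k$. Throughout I assume $np(1-p)>1$, since otherwise the right-hand side is imaginary and the claim is vacuous; this hypothesis forces $np>1$ and $n(1-p)>1$, so that $k_0:=\floor{np}$ and $n-k_0$ are both at least $1$ (which is what makes the Stirling lower bounds usable).

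First I would use unimodality to reduce to $k=k_0$. The ratio of consecutive masses $\frac{f(k+1)}{f(k)}=\frac{(n-k)p}{(k+1)(1-p)}$ is $\ge1$ precisely when $k\le(n+1)p-1$, so $f$ is non-decreasing up to the mode $\floor{(n+1)p}$. Since $\floor{np}\le\floor{(n+1)p}$, every $k\le\floor{np}$ satisfies $f(k)\le f(k_0)$, and it suffices to bound $f(k_0)$.

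Next I would apply $\sqrt{2\pi m}\,m^{m}e^{-m}<m!<\sqrt{2\pi m}\,m^{m}e^{-m}e^{1/(12m)}$, using the upper bound on $n!$ and the lower bounds on $k_0!$ and $(n-k_0)!$. After the $e^{-n}$ factors cancel this gives
\[
f(k_0)\le\frac{e^{1/(12n)}}{\sqrt{2\pi}}\sqrt{\frac{n}{k_0(n-k_0)}}\left(\frac{np}{k_0}\right)^{k_0}\left(\frac{n(1-p)}{n-k_0}\right)^{n-k_0}.
\]
Two factors are then disposed of cheaply: $e^{1/(12n)}\le e^{1/12}$ since $n\ge1$, and $\left(\frac{np}{k_0}\right)^{k_0}\left(\frac{n(1-p)}{n-k_0}\right)^{n-k_0}\le1$ because $q\mapsto q^{k_0}(1-q)^{n-k_0}$ is maximized on $[0,1]$ at $q=k_0/n$ (equivalently, by nonnegativity of the relative entropy).

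The remaining and main obstacle is the square-root factor: I must show $\sqrt{n/(k_0(n-k_0))}\le 1/\sqrt{np(1-p)-1}$, i.e. $k_0(n-k_0)\ge n\bigl(np(1-p)-1\bigr)$. Writing $k_0=np-\theta$ with $\theta=np-\floor{np}\in[0,1)$ and expanding gives $k_0(n-k_0)=n^{2}p(1-p)+\theta n(2p-1)-\theta^{2}$, so the inequality becomes $h(\theta):=\theta n(2p-1)-\theta^{2}+n\ge0$. Since $h$ is concave, its infimum on $[0,1)$ is attained at an endpoint, and $h(0)=n\ge1$, $h(1)=2np-1>0$ (using $np>1$); hence $h>0$ throughout. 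Multiplying the three estimates yields exactly $f(k_0)\le \frac{e^{1/12}}{\sqrt{2\pi(np(1-p)-1)}}=\frac{e^{1/12}}{\sqrt{2\pi np(1-p)-2\pi}}$. The delicate point is this last inequality: when $p<1/2$ the term $\theta n(2p-1)$ is negative, and the resulting deficit is closed only by the bound $np>1$ that is guaranteed by the hypothesis $np(1-p)>1$.
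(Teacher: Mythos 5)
Your proof is correct, and although its computational engine --- Robbins' explicit Stirling bounds together with the fact that $q\mapsto q^{k_0}(1-q)^{n-k_0}$ is maximized at $q=k_0/n$ (nonnegativity of relative entropy) --- is the same as the paper's, the architecture is genuinely different, and the difference matters. The paper runs the Stirling/entropy estimate at an \emph{arbitrary} $k\le\floor{np}$ and then asserts as its final step that $\frac{1}{\sqrt{2\pi}}\sqrt{n/(k(n-k))}\,e^{1/(12n)}\le e^{1/12}/\sqrt{2\pi np(1-p)-2\pi}$, which (up to the harmless replacement of $e^{1/(12n)}$ by $e^{1/12}$) amounts to $k(n-k)\ge n^2p(1-p)-n$; that inequality fails for $k$ far below $np$ (e.g.\ $k=1$, $n=100$, $p=1/2$ gives $99$ versus $2400$, and indeed the paper's displayed final inequality is numerically false there), so the paper's chain is not valid uniformly in $k$ --- the lemma survives only because the entropy factor discarded one step earlier is exponentially small for exactly those $k$. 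Your two extra ingredients repair this: the unimodality reduction ($f(k+1)/f(k)\ge 1$ iff $k\le (n+1)p-1$, hence $f(k)\le f(k_0)$ for all $k\le k_0=\floor{np}$) confines the hard estimate to the single point $k_0$, and the substitution $k_0=np-\theta$, $\theta\in[0,1)$, with the concavity/endpoint evaluation $h(0)=n>0$, $h(1)=2np-1>0$ (using $np>1$, which you correctly extract from the non-vacuousness condition $np(1-p)>1$) is an actual proof of $k_0(n-k_0)\ge n^2p(1-p)-n$, a step the paper asserts without justification even at the mode. What the paper's route buys is brevity; what yours buys is a proof that is valid for every $k\le\floor{np}$, which is what the lemma actually claims, at the modest cost of the extra monotonicity observation.
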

\begin{proof}
	\begin{align*}
		f(n,p)\le &\frac{\sqrt{2\pi n}(\frac{n}{e})^n e^{\frac{1}{12n}}}{\sqrt{2\pi k}(\frac{k}{e})^k\sqrt{2\pi(n-k)}(\frac{n-k}{e})^{n-k}}p^k(1-p)^k\nonumber\\
		\le&\frac{1}{\sqrt{2\pi}}\sqrt{\frac{n}{k(n-k)}}e^{\frac{1}{12n}}e^{n[\frac{k}{n}\ln\frac{pn}{k}+(1-\frac{k}{n})\ln\frac{1-p}{1-\frac{k}{n}}]}\\
		\le&\frac{1}{\sqrt{2\pi}}\sqrt{\frac{n}{k(n-k)}}e^{\frac{1}{12n}}\\
		\le& \frac{e^{\frac{1}{12}}}{\sqrt{2\pi np(1-p)-2\pi}}.
	\end{align*}
	Here the first inequality is due to the Stirling's formula, the third inequality is due to the nonnegativity of KL Divergence.
\end{proof}

\begin{lemma}\label{l3}
	Assume $\Psi$ and $\Phi$ are quantum channels in $\mathcal{C}_{\mathcal{H}}$, $p$ is in $(0,1)$ such that  $\Theta=p\Psi+(1-p)\Phi\in \mathcal{M}_{R}^{\tilde{T}}$. Let
	\begin{align*}
		\Gamma_n=\frac{1}{n}\sum_{i=1}^{n}\Theta^{\otimes(i-1)}\otimes \Psi\otimes\Theta^{n-i},
	\end{align*}
	For any $\epsilon>0,$ when $ n-1\ge \frac{2(1-p)e^{\frac{1}{6}}}{\epsilon^2\pi p}+\frac{1}{p(1-p)}$,
	\begin{align*}
		\frac{1}{2}	||\Gamma_n-\Theta^{\otimes n}||_{\diamond}\le \epsilon.
	\end{align*}
\end{lemma}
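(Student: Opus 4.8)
The plan is to reduce the diamond-norm estimate to a classical total-variation distance between two explicit distributions on the number of tensor slots carrying $\Psi$. First I would multiply out both channels over the $n$ slots. Writing $\Theta=p\Psi+(1-p)\Phi$, the map $\Theta^{\otimes n}$ expands into a mixture of product channels, one for each subset $T\subseteq\{1,\dots,n\}$ of slots carrying $\Psi$ (the rest carrying $\Phi$), with weight $p^{|T|}(1-p)^{n-|T|}$. Expanding a single summand $\Theta^{\otimes(i-1)}\otimes\Psi\otimes\Theta^{\otimes(n-i)}$ of $\Gamma_n$ pins slot $i$ to $\Psi$ and distributes the remaining $\Psi$'s binomially over the other $n-1$ slots; averaging over $i$ symmetrizes the positions. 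Since in both channels the weight of a configuration $T$ depends only on $k=|T|$, I would group by $k$ and define $\bar\Omega_k$ as the uniform average of all product channels with exactly $k$ copies of $\Psi$. This yields $\Theta^{\otimes n}=\sum_k b_k\,\bar\Omega_k$ and $\Gamma_n=\sum_k g_k\,\bar\Omega_k$, where $b_k=\binom{n}{k}p^k(1-p)^{n-k}$ is the $\mathrm{Bin}(n,p)$ mass and $g_k=\binom{n-1}{k-1}p^{k-1}(1-p)^{n-k}$ is the mass of $1+\mathrm{Bin}(n-1,p)$.

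Because each $\bar\Omega_k$ is CPTP and therefore has unit diamond norm, the triangle inequality for $\|\cdot\|_\diamond$ collapses the problem to a classical estimate:
\[
\tfrac12\|\Gamma_n-\Theta^{\otimes n}\|_\diamond\;\le\;\tfrac12\sum_k|g_k-b_k|.
\]
The key computation is the convolution identity $b_k=p\,h_{k-1}+(1-p)h_k$, where $h_j=\binom{n-1}{j}p^j(1-p)^{n-1-j}$ is the $\mathrm{Bin}(n-1,p)$ mass, together with $g_k=h_{k-1}$. Subtracting gives the telescoping form $g_k-b_k=(1-p)(h_{k-1}-h_k)$. Since the sequence $h$ is unimodal, I would conclude $\sum_k|h_{k-1}-h_k|=2\max_j h_j$, hence $\tfrac12\|\Gamma_n-\Theta^{\otimes n}\|_\diamond\le(1-p)\max_j h_j$.

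Finally I would bound the peak $\max_j h_j$ of the $\mathrm{Bin}(n-1,p)$ distribution by invoking Lemma \ref{l1}, which gives $\max_j h_j\le e^{1/12}/\sqrt{2\pi(n-1)p(1-p)-2\pi}$. Imposing $(1-p)\max_j h_j\le\epsilon$, squaring, and solving for $n-1$ then reproduces a sufficient condition of the form $n-1\ge \frac{c(1-p)e^{1/6}}{\epsilon^2 p}+\frac{1}{p(1-p)}$ for an absolute constant $c$; the constant stated in the lemma is a safe over-estimate, and the telescoping step above in fact yields a sharper one.

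I expect the main obstacle to be the reduction carried out in the first two paragraphs: the diamond norm is in general hard to control directly, and the whole argument hinges on recognizing that $\Gamma_n$ and $\Theta^{\otimes n}$ are permutation-symmetric mixtures of the \emph{same} family $\{\bar\Omega_k\}$, so that their operator distance is governed by a purely classical distance between the weight distributions $\mathrm{Bin}(n,p)$ and $1+\mathrm{Bin}(n-1,p)$. Once this is in place, the telescoping identity and the Stirling-type bound of Lemma \ref{l1} are routine. A minor care point is that Lemma \ref{l1} is phrased for indices up to $\lfloor np\rfloor$, so I would note that it bounds precisely the modal value of the binomial mass, which is exactly $\max_j h_j$.
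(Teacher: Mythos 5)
Your proof is correct, and it shares the paper's skeleton --- the identical decomposition into the symmetrized product channels $\bar\Omega_k$ (the paper's $\Delta_k$), reduction of the diamond norm to the $\ell_1$ distance between the coefficient distributions, and the final appeal to Lemma \ref{l1} --- but both main reductions are carried out by different, and arguably cleaner, means. Where the paper invokes the Choi-operator characterization of the diamond norm (Lemma \ref{l2}) followed by a H\"older estimate, you use only the triangle inequality and the fact that a CPTP map has unit diamond norm, bypassing Lemma \ref{l2} altogether. Where the paper evaluates $\frac{1}{p}\sum_k\binom{n}{k}p^k(1-p)^{n-k}\,\lvert p-\frac{k}{n}\rvert$ by noting that the signed sum vanishes and then telescoping the positive part, you recognize the same quantity as the total-variation distance between $\mathrm{Bin}(n,p)$ and $1+\mathrm{Bin}(n-1,p)$ and compute it from the convolution identity $b_k=p\,h_{k-1}+(1-p)h_k$ plus unimodality; indeed $\frac{1}{p}b_k\bigl(p-\frac{k}{n}\bigr)=b_k-g_k$, so the two computations produce the identical quantity $2(1-p)h_{\floor{np}}$. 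Consequently your closing claim of a ``sharper'' constant should be softened: your intermediate bound coincides exactly with the paper's, and the factor-of-four slack relative to the stated threshold on $n-1$ is the same in both. One caveat you share with the paper: Lemma \ref{l1}, instantiated with $n-1$ trials, is stated only for $k\le\floor{(n-1)p}$, whereas the mode of $\mathrm{Bin}(n-1,p)$ sits at $\floor{np}$, which can equal $\floor{(n-1)p}+1$; so your remark that the lemma ``bounds precisely the modal value'' is not literally guaranteed by its statement, though its proof (Stirling plus nonnegativity of the KL term) does extend to the mode. This is an imprecision inherited from the paper, not a new gap in your argument.
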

\begin{proof}
	Through computation, we have
	\begin{align*}
		\Theta^{\otimes n}-\Gamma_n=&(p\Psi+(1-p\Phi))^{\otimes n}-\frac{1}{n}\sum_{i=1}^{n}\Theta^{\otimes(i-1)}\otimes \Psi\otimes\Theta^{n-i}\nonumber\\
		=&\frac{1}{p}\sum_{k=0}^n C_n^k p^k(1-p)^{n-k}(p-\frac{k}{n})\Delta_k,
	\end{align*}
	here $\Delta_k=\frac{1}{C_n^k}\sum\limits_{|T|=k}\Psi^{T}\otimes\Phi^{[n]-T}$, $[n]=\{1,2,\cdots,n\}.$
	
	Then by Lemma \ref{l2}, we have
	\begin{align}
		||\Gamma_n-\Theta^{\otimes n}||_{\diamond}=&\sup_{||A||_2=1}||(A\otimes\mathbb{I})(J_{\Gamma_n}-J_{\Theta^{\otimes n}})||_1\nonumber\\
		\le&||A\otimes\mathbb{I}||_{\infty}||(J_{\Gamma_n}-J_{\Theta^{\otimes n}})||_{1}\nonumber\\
		\le&||(J_{\Gamma_n}-J_{\Theta^{\otimes n}})||_{1},\label{f0}
	\end{align}
	the first inequality is due to the H$\ddot{o}$ld inequality $||AB||_1\le ||A||_{\infty}||B||_1$, the second inequality is due to that $||A||_2=1.$
	Below we present the upper bound of $||J_{\Gamma_n-J_{\Theta^{\otimes n}}}||_1,$
	\begin{align}
		&||(J_{\Gamma_n}-J_{\Theta^{\otimes n}})||_{1}\nonumber\\
		\le &||(\mathbb{I}\otimes (\frac{1}{p}\sum_{k=0}^n C_n^k p^k(1-p)^{n-k}(p-\frac{k}{n})\Delta_k))(\ket{\psi_{+}}\bra{\psi_{+}})||_1\nonumber\\
		\le&\frac{1}{p}\sum_{k=0}^n C_n^k p^k(1-p)^{n-k}|p-\frac{k}{n}|,\label{fi1}
	\end{align}
	As 
	\begin{align*}
		&\sum_{k=0}^n C_n^k p^k(1-p)^{n-k}(p-\frac{k}{n})\\
		=&\sum_{k=0}^n C_n^k p^{k+1}(1-p)^{n-k}-\sum_{k=1}^nC_{n-1}^{k-1}p^k(1-p)^{n-k}\\
		=&p-p=0,
	\end{align*}
	in the first equality, we use $C_n^k\times\frac{k}{n}=C_{n-1}^{k-1}$. The second equality is due to that $(p+1-p)^m=\sum_{k=0}^mp^k(1-p)^{m-k}$. 
	
	Then (\ref{fi1}) turns into the following, 
	\begin{align}
		&\frac{1}{p}\sum_{k=0}^n C_n^k p^k(1-p)^{n-k}|p-\frac{k}{n}|\nonumber\\=&\frac{2}{p}\sum_{k=0}^{\floor{np}} C_n^k p^k(1-p)^{n-k}(p-\frac{k}{n})\nonumber\\
		=&\frac{2(1-p)^n}{p}\sum_{k=0}^{\floor{np}}[C_n^k\frac{p^{k+1}}{(1-p)^k}-C_{n-1}^{k-1}\frac{p^k}{(1-p)^k}]\nonumber\\
		=&\frac{2(1-p)^n}{p}\sum_{k=0}^{\floor{np}}[C_{n-1}^{k-1}\frac{p^k}{(1-p)^{k}}(p-1)+C_{n-1}^k\frac{p^{k+1}}{(1-p)^k}]\nonumber\\
		=&\frac{2}{p}\times(1-p)^n[C_{n-1}^{\floor{np}}\frac{p^{\floor{np}+1}}{(1-p)^k}-p]\nonumber\\
		\le& \frac{\sqrt{2}(1-p)e^{\frac{1}{12}}}{\sqrt{\pi p(n-1)(1-p)-\pi}}\label{sf4}
	\end{align}
	The third equality is due to that $C_n^m=C_{n-1}^m+C_{n-1}^{m-1},$ the last inequality is due to Lemma \ref{l1}.
	
	Combing (\ref{f0}), (\ref{fi1}), and (\ref{sf4}), when $n-1\ge \frac{2(1-p)e^{\frac{1}{6}}}{\epsilon^2\pi p}+\frac{1}{p(1-p)}$,
	\begin{align*}
		||\Gamma_n-\Theta^{\otimes n}||_{\diamond}\le \epsilon.
	\end{align*}
	
\end{proof}

\begin{lemma}\label{l13}\cite{liu2019resource}
		Assume $\Psi$ and $\Phi$ are quantum channels in $\mathcal{C}_{\mathcal{H}}$, $p$ is in $(0,1)$ such that  $\Theta=p\Psi+(1-p)\Phi\in \mathcal{M}_{R}^{\tilde{T}}$. Let
	\begin{align*}
		\Gamma_n=\frac{1}{n}\sum_{i=1}^{n}\Theta^{\otimes(i-1)}\otimes \Psi\otimes\Theta^{n-i},
	\end{align*}
	For any $\epsilon>0,$ when $ \log_2 n\ge \log_2\frac{1}{p}+\log_2\frac{1}{\epsilon^2}$,
	\begin{align*}
		\frac{1}{2}	||\Gamma_n-\Theta^{\otimes n}||_{\diamond}\le \epsilon.
	\end{align*}
\end{lemma}

Theorem \ref{th4}:\emph{
Assume $\mathcal{M}_R$ satisfies the following properties, 
\begin{itemize}
	\item $\mathcal{M}_R$ is closed under the permuation operation. That is, for a system $\mathcal{H}^{\otimes n}$, $\mathcal{U}_{\pi}(\cdot)=U^{\dagger}_{\pi}\cdot U_{\pi}\in \mathcal{M}_R$, $\pi$ is some permutation in $S_n$.
	\item $\mathcal{M}_R=\mathcal{M}_R^{\tilde{T}}.$
\end{itemize}
\noindent	
When	 $\mathcal{E}:A\longrightarrow B$ is a channel, then for arbitrary $\eta$ and $\epsilon$ such that $0<\eta<\epsilon<1$, we have
\begin{align*}
\mathbb{R}_L^{\sqrt{\epsilon}}(\mathcal{E})\le	C^{\epsilon}(\mathcal{E})\le \min\left\{\log_2\left[\frac{2(2^{\mathbb{R}_L^{\epsilon-\eta}(\mathcal{E})}-1)e^{\frac{1}{6}}}{\eta^2\pi }+\frac{1}{2^{-\mathbb{R}_L^{\epsilon-\eta}(\mathcal{E})}(1-2^{-\mathbb{R}_L^{\epsilon-\eta}(\mathcal{E})})}+1\right],\mathbb{R}_L^{\epsilon-\eta}(\mathcal{E})+\log_2\frac{1}{\eta^2}\right\}.
\end{align*} 
}

\begin{proof}
According to the definition of $\mathbb{R}_L^{\epsilon-\eta}(\cdot)$, we have there exists a channel $\mathcal{F}$ such that $\frac{1}{2}||\mathcal{E}-\mathcal{F}||_{\diamond}\le \epsilon-\eta$ and $\mathbb{R}_L^{\epsilon-\eta}(\mathcal{E})=\mathbb{R}_L(\mathcal{F})$. Let $-\log_2p=\mathbb{R}_L(\mathcal{F})$, $\mathcal{G}$ is the optimal channel such that $$\mathcal{K} =p\mathcal{F}+(1-p)\mathcal{G}\in \mathcal{M}_{R}^{\tilde{T}}.$$
Assume $\mathcal{U}_i$ is the permutation unitary channel between the first and the $i$-th subsystems, then 
\begin{align*}
	\mathcal{E}^{'}=&	\frac{1}{n}\sum_{i=1}^n \mathcal{U}_i\circ(\mathcal{K}\otimes\mathcal{E}^{\otimes(n-1)})\circ\mathcal{U}_i\\
	=&\frac{1}{n}\sum_{i=1}^n \mathcal{K}^{\otimes i-1}\otimes \mathcal{E}\otimes\mathcal{K}^{\otimes n-i},\\
	\mathcal{E}^{''}=&\frac{1}{n}\sum_{i=1}^n \mathcal{U}_i\circ(\mathcal{K}\otimes\mathcal{F}^{\otimes(n-1)})\circ\mathcal{U}_i\\
	=&\frac{1}{n}\sum_{i=1}^n \mathcal{K}^{\otimes i-1}\otimes \mathcal{F}\otimes\mathcal{K}^{\otimes n-i}
\end{align*}

Based on Lemma \ref{l10}, Lemma \ref{l3} and Lemma \ref{l13}, when taking $n\ge \min(\frac{2(1-p)e^{\frac{1}{6}}}{\eta^2\pi p}+\frac{1}{p(1-p)}+1,\frac{1}{p\eta^2})$, 
\begin{align*}
	||\mathcal{E}^{'}-\mathcal{K}^{\otimes n}||_{\diamond}\le& ||\mathcal{E}^{''}-\mathcal{E}^{'}||_{\diamond}+||\mathcal{E}^{''}-\mathcal{K}^{\otimes n}||_{\diamond}\\
	\le& 2(\epsilon-\eta+\eta)\\
	=&2\epsilon.
\end{align*}

Next we prove the lower bound of $C^{\epsilon}(\mathcal{E})$. Assume $\{\mathcal{U}_i,\mathcal{V}_i,p_i\}$ is a given $\epsilon$-destruction process of RNG for $\mathcal{E}$, and $\Lambda\in \mathcal{M}_{R}^{\tilde{T}}$, there exists $\mathcal{K}\in \mathcal{M}_{R}$,
\begin{align*}
	\frac{1}{2}||\sum_{i}^kp_i\mathcal{U}_i\circ(\mathcal{E}\otimes\Lambda)\circ\mathcal{V}_i-\mathcal{K}||_{\diamond}\le \epsilon.
\end{align*}

{\color{black}The proof of the lower bound is similar to that in \cite{liu2019resource,hsieh2020resource}. Assume there exists an ensemble of pairs of free reversible channels $\{p_i,\mathcal{U}_i,\mathcal{V}_i\}_{i=1}^k$ (i.e. $\mathcal{U}_i,\mathcal{V}_i\in \mathcal{M}_R, \forall i=1,2,\cdots,k),\mathcal{G}:A^{'}\longrightarrow B^{'}\in \mathcal{M}_R^{\tilde{T}}$, such that 
	\begin{align*}
		\frac{1}{2}||\sum_ip_i\mathcal{U}_i\circ(\mathcal{E}\otimes\mathcal{G})\circ\mathcal{V}_i-\Lambda||_{\diamond}\le \epsilon,
		\end{align*}
		let $\mathcal{N}_i=\mathcal{U}_i\circ(\mathcal{E}\otimes\mathcal{G})\circ\mathcal{V}_i$ and $\mathcal{M}=\Lambda.$
		By the relations between the trace norm and fidelity of states \cite{fuchs1999}, then 
		\begin{align*}
			P_{cb}(\sum_{i=1}^kp_i\mathcal{N}_i,\mathcal{M})\le \sqrt{\epsilon(2-\epsilon)},
		\end{align*}
		here 
		\begin{align*}
			P_{cb}(\mathcal{N},\mathcal{M})=\sqrt{1-\inf_{\rho} F^2((\mathcal{N}\otimes \mathcal{I})(\rho),(\mathcal{M}\otimes\mathcal{I})(\rho))}.
		\end{align*}
		
		Next let $W_i:AA^{'}\hookrightarrow BB^{'}\otimes E$ be the isometric dilations of the channels $\mathcal{N}_i$, we can write down an isometric dilation for $\overline{N}=\sum_ip_i\mathcal{N}_i$, $W=\sum_i\sqrt{p_i}W_i\otimes|i\rangle^F.$ By the Uhlmann theorem for the completely bounded fidelity \cite{dennis2008}, we can have an isometric dilation $Z=\sum_i\sqrt{p_i}Z_i\otimes|i\rangle^F$ of $\mathcal{M}$ that is $\sqrt{\epsilon(2-\epsilon)}$ close to $W$ with respect to $P_{cb}(\cdot,\cdot)$, Tracing out the subsystem $E$ and measuring $F$, and based on Fuchs-van de Graaf inequality \cite{fuchs1999,nielsen2002quantum}, we have
		
		\begin{align*}
					{\sqrt{\epsilon}}\ge& \frac{1}{2}||\sum_ip_i\mathcal{N}_i\otimes|i\rangle\langle i|-\sum_i p_i\mathcal{M}_i\otimes|i\rangle\langle i|||_{\diamond}\\
					=&\frac{1}{2}\sum_ip_i||\mathcal{N}_i-\mathcal{M}_i||_{\diamond}\\
					=&\frac{1}{2}\sum_ip_i||\mathcal{U}_i\circ(\mathcal{E}\otimes\mathcal{G})\circ\mathcal{V}_i-\mathcal{M}_i||_{\diamond}\\
					\ge&\frac{1}{2}	||\mathcal{E}\otimes\mathcal{G}-\sum_i p_i \mathcal{U}_i^{\dagger}\circ\mathcal{M}_i\circ\mathcal{V}_i^{\dagger}||_{\diamond}
		\end{align*}	}
where $\mathcal{M}_i=\mathrm{tr}_E W_i\cdot W_i^{\dagger}$ are completely positive with $\sum_ip_i\mathcal{M}_i=\mathcal{K}\in \mathcal{M}_R$, then  
\begin{align*}
	\sum_i p_i \mathcal{U}_i^{\dagger}\circ\mathcal{M}_i\circ\mathcal{V}_i^{\dagger}\le \sum_i \mathcal{U}_i^{\dagger}\circ\mathcal{K}\circ\mathcal{V}_i^{\dagger}.
\end{align*}
As the resource theory we considered here are convex, $\frac{1}{k}\sum_i\mathcal{U}_i^{\dagger}\circ\mathcal{M}_i\circ\mathcal{V}_i^{\dagger}\in \mathcal{M}_{R}^{\tilde{T}}$. Hence,
\begin{align}
	\mathbb{R}_L^{\sqrt{\epsilon}}(\mathcal{E}\otimes\mathcal{G})\le \mathbb{R}_L(\sum_i p_i\mathcal{U}_i^{\dagger}\circ\mathcal{M}_i\circ\mathcal{V}_i^{\dagger})\le \log_2k \label{sf0}
\end{align}

As $\mathcal{G}\in \mathcal{M_R^{\tilde{T}}}=\mathcal{M}_R$, we have
{\color{black}	\begin{align*}
\mathbb{R}_L^{\sqrt{\epsilon}}(\mathcal{E}_A\otimes\mathcal{G}_B)=&-\log_2\sup\{p|p\mathcal{K}_{AB}+(1-p)\mathcal{J}_{AB}\in \mathcal{M}_R\}\\
\ge&-\log_2\sup\{p|p\mathcal{K}_A+(1-p)\mathcal{J}_A\in\mathcal{M}_R\}\\
\ge&\mathbb{R}_L^{\sqrt{\epsilon}}(\mathcal{E}),
\end{align*}
here $\mathcal{K}_{AB}$ in the first equality is the optimal in terms of $\mathbb{R}_L^{\sqrt{\epsilon}}(\cdot)$ for $\mathcal{E}_A\otimes\mathcal{G}_B$. In the first inequality,
$\mathcal{L}_A(\cdot)=\mathrm{tr}_B\mathcal{L}_{AB}(\cdot\otimes\eta_B)$, here $\eta_B\in \mathcal{F}_R^{\tilde{T}},$ $\mathcal{L}$ is $\mathcal{K}$ or $\mathcal{J},$ the first inequality is due to that $\mathcal{M}_R$ is closed under partial trace and composition, the last inequality is due to the definition of $\mathbb{R}_L^{\epsilon}(\cdot)$ and that the diamond norm is monotone under the partial trace operation, then}
\begin{align}
	C^{\epsilon}(\mathcal{E})\ge  \mathbb{R}_L^{\sqrt{\epsilon}}(\mathcal{E}).
\end{align}
Hence we finish the proof.
\end{proof}

\subsection{Proof of Theorem \ref{th6}  }\label{asy}
\begin{lemma}\label{lRg}
	Assume $\Lambda\in \epsilon-\emph{RNG}$, let $\delta\ge \epsilon,$ then 
	\begin{align}
		R^{\delta}_G(\Lambda(\rho))\le& R_G(\rho),\\
		LR^{\delta}_G(\Lambda(\rho))\le& LR_G(\rho).
	\end{align}
\end{lemma}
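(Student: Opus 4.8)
The plan is to reduce the statement to the ordinary monotonicity of the generalized robustness under genuine RNOs, which is already available from Theorem~\ref{th5}. Since $\Lambda$ is an $\epsilon$-RNO, by definition there exists a channel $\mathcal{G}\in\mathcal{M}_{R}$ with $\tfrac{1}{2}\|\Lambda-\mathcal{G}\|_{\diamond}\le\epsilon$. The idea is to use $\mathcal{G}(\rho)$ as the explicit smoothing witness for $\Lambda(\rho)$: if $\mathcal{G}(\rho)$ lies in the $\epsilon$-ball $B_{\epsilon}(\Lambda(\rho))$, then the infimum defining $R_G^{\epsilon}(\Lambda(\rho))$ is controlled by $R_G(\mathcal{G}(\rho))$, and the latter is in turn controlled by $R_G(\rho)$ because $\mathcal{G}$ is an honest RNO.

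First I would verify that $\mathcal{G}(\rho)\in B_{\epsilon}(\Lambda(\rho))$. This is the one point that requires an argument, though a short one: specializing the supremum in $\|\mathcal{K}\|_{\diamond}=\sup_{\rho_{AB}}\|(\mathcal{K}\otimes\mathcal{I})(\rho_{AB})\|_1$ to the ancilla-free input $\rho$ shows that the diamond norm upper-bounds the output trace distance,
\begin{align*}
	\frac{1}{2}\|\Lambda(\rho)-\mathcal{G}(\rho)\|_1\le\frac{1}{2}\|\Lambda-\mathcal{G}\|_{\diamond}\le\epsilon,
\end{align*}
so indeed $\mathcal{G}(\rho)\in B_{\epsilon}(\Lambda(\rho))$. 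Because $\mathcal{G},\Lambda$ are channels, $\mathcal{G}(\rho)$ and $\Lambda(\rho)$ are legitimate states, so this membership is meaningful.

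I would then chain the inequalities. Since $\mathcal{G}(\rho)$ is an admissible point in the smoothing ball, the definition of the smoothed robustness gives $R_G^{\epsilon}(\Lambda(\rho))\le R_G(\mathcal{G}(\rho))$; and since $\mathcal{G}\in\mathcal{M}_{R}$ while $R_G(\cdot)$ is a valid quantifier (Theorem~\ref{th5}(ii), property O2), we have $R_G(\mathcal{G}(\rho))\le R_G(\rho)$, which establishes the first inequality. For the logarithmic version I would simply note that $LR_G=\log_2(1+R_G)$ is a monotonically increasing function of $R_G$, so the same witness and the same monotonicity apply term by term:
\begin{align*}
	LR_G^{\epsilon}(\Lambda(\rho))\le LR_G(\mathcal{G}(\rho))=\log_2\bigl(1+R_G(\mathcal{G}(\rho))\bigr)\le\log_2\bigl(1+R_G(\rho)\bigr)=LR_G(\rho).
\end{align*}
The proof is essentially routine; the only place demanding any care is the output-distance bound in the middle paragraph, and even there the work is confined to reading off the diamond-norm definition at a single input state.
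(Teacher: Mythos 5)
Your proof is correct and follows essentially the same route as the paper's: both use the output of the nearby free channel $\mathcal{G}\in\mathcal{M}_{R}$ as the witness in the smoothing ball $B_{\epsilon}(\Lambda(\rho))$ and then conclude via RNO-monotonicity of $R_G$. The only cosmetic difference is that the paper re-derives that monotonicity inline (pushing the optimal decomposition $\rho+R_G(\rho)\sigma=(1+R_G(\rho))\pi$ through the free channel) whereas you cite Theorem~\ref{th5}(ii); your explicit handling of the factor $\tfrac{1}{2}$ in the diamond-norm-to-trace-distance step is, if anything, cleaner than the paper's.
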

\begin{proof}
	Assume $\rho$ is a state, $\sigma$ is the optimal in terms of $R_{G}(\cdot)$ for $\rho$, 
	\begin{align*}
		\rho+R_G(\rho)\sigma=(1+R_G(\rho))\pi,
	\end{align*}
	here $\pi\in\mathcal{F}_R$. Then we have
	\begin{align}
		\Lambda(\rho)+R_G(\rho)\Lambda(\sigma)=(1+R_G(\rho))\Lambda(\pi),
	\end{align}
	as $\Lambda$ is $\epsilon$-RNG, there exists a free channel $\Gamma\in \mathcal{M}_{R}$ such that 
	\begin{align}
	\frac{1}{2}	||\Lambda-\Gamma||_{\diamond}\le& \epsilon,\nonumber	\\
		\Gamma(\rho)+R_G(\rho)\Gamma(\sigma)=&(1+R_G(\rho))\Gamma(\pi),\label{r1}
	\end{align} 
	then 
	\begin{align*}
	\frac{1}{2}	||\Lambda(\rho)-\Gamma(\rho)||_1\le& \epsilon.
	\end{align*}
	That is, $\Gamma(\rho)\in B_{\epsilon}(\Lambda(\rho))$, combing $(\ref{r1})$, we have
	\begin{align}
		R^{\epsilon}_G(\Lambda(\rho))\le R_G(\rho),\nonumber\\
		LR^{\epsilon}_G(\Lambda(\rho))\le LR_G(\rho).
	\end{align}
	As $R_{G}^{\epsilon}(\cdot)$ and $LR_G^{\epsilon}(\cdot)$ are nonincreasing in terms of $\epsilon$, we finish the proof.
\end{proof}
\begin{lemma}\label{l4}
	Assume $\rho$ and $\sigma$ are two states, $\frac{1}{2}||\rho-\sigma||_1\le \epsilon$, then 
	\begin{align}
		R_G^{2\epsilon}(\sigma)\le R_G^{\epsilon}(\rho),\\
		LR_G^{2\epsilon}(\sigma)\le LR_G^{\epsilon}(\rho).
	\end{align} 
\end{lemma}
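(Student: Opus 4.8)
The plan is to reduce the statement to an elementary containment of smoothing balls together with the fact that an infimum over a larger set cannot increase. First I would recall that, by definition, $R_G^{\epsilon}(\rho)=\inf_{\tilde{\rho}\in B_\epsilon(\rho)}R_G(\tilde{\rho})$, where $B_\epsilon(\rho)=\{\tilde{\rho}\,|\,\frac{1}{2}||\rho-\tilde{\rho}||_1\le\epsilon\}$, and analogously $R_G^{2\epsilon}(\sigma)$ is an infimum over the larger ball $B_{2\epsilon}(\sigma)$. The whole argument rests on showing that the smaller ball around $\rho$ sits inside the larger ball around $\sigma$, so that the defining infimum on the left ranges over a superset of the one on the right.

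The key step is the triangle inequality for the trace distance. Fix any $\tilde{\rho}\in B_\epsilon(\rho)$. Then
\begin{align*}
    \frac{1}{2}||\tilde{\rho}-\sigma||_1 \le \frac{1}{2}||\tilde{\rho}-\rho||_1 + \frac{1}{2}||\rho-\sigma||_1 \le \epsilon+\epsilon = 2\epsilon,
\end{align*}
where the second bound uses the hypothesis $\frac{1}{2}||\rho-\sigma||_1\le\epsilon$ together with $\tilde{\rho}\in B_\epsilon(\rho)$. Hence $\tilde{\rho}\in B_{2\epsilon}(\sigma)$, which establishes the inclusion $B_\epsilon(\rho)\subseteq B_{2\epsilon}(\sigma)$.

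Taking the infimum of $R_G(\cdot)$ over the larger set $B_{2\epsilon}(\sigma)$ can only lower (or leave unchanged) the value relative to the infimum over $B_\epsilon(\rho)$, so
\begin{align*}
    R_G^{2\epsilon}(\sigma)=\inf_{\tilde{\sigma}\in B_{2\epsilon}(\sigma)}R_G(\tilde{\sigma}) \le \inf_{\tilde{\rho}\in B_\epsilon(\rho)}R_G(\tilde{\rho})=R_G^{\epsilon}(\rho).
\end{align*}
The inequality for $LR_G$ follows by the same containment argument applied to $LR_G(\cdot)=\log_2(1+R_G(\cdot))$; alternatively, since $\log_2(1+\cdot)$ is monotonically increasing, the $R_G$ bound transfers directly. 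I do not anticipate any real obstacle here, as the proof is purely a matter of the triangle inequality and monotonicity of the infimum. The only point requiring minor care is the factor of $2$ in the smoothing radius on the left-hand side, which is exactly what the two applications of the triangle inequality produce and cannot be sharpened to $\epsilon$ without further structure.
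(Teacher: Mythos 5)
Your proof is correct and follows essentially the same route as the paper: both rest on the triangle inequality for the trace distance to place a point of the $\epsilon$-ball around $\rho$ inside the $2\epsilon$-ball around $\sigma$, and then invoke the definition of the smoothed quantity. The only (minor) difference is that you prove the full ball inclusion $B_\epsilon(\rho)\subseteq B_{2\epsilon}(\sigma)$ and compare infima over nested sets, whereas the paper applies the same inequality to a single optimal $\theta$ attaining $R_G^{\epsilon}(\rho)$; your variant is marginally more careful since it does not presuppose that the infimum is attained.
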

\begin{proof}
	Assume $\theta$ is the optimal for $\rho$ in terms of $R_G^{\epsilon}(\cdot)$, that is, 
	\begin{align*}
		\frac{1}{2}||\rho-\theta||_1\le \epsilon, \hspace{7mm}R_G^{\epsilon}(\rho)=R_G(\theta),
	\end{align*}
	then 
	\begin{align*}
		&\frac{1}{2}||\sigma-\theta||_1\\
		\le&\frac{1}{2}||\sigma-\rho||_1+\frac{1}{2}||\rho-\theta||_1\\
		\le&2\epsilon.
	\end{align*}
	That is, $\theta\in B_{2\epsilon}(\sigma).$	Then based on the definition of $R_G^{\epsilon}(\cdot)$, we have
	\begin{align*}
		R_G^{2\epsilon}(\sigma)\le R_G^{\epsilon}(\rho).
	\end{align*}
\end{proof}

\begin{lemma}\label{l5}
		Assume $\rho$ is a state, then
	\begin{align}
LR_G(\rho^{\otimes k})\le kLR_G(\rho).
	\end{align}
\end{lemma}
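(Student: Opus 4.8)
The plan is to reduce the claimed logarithmic inequality to submultiplicativity of the affine robustness. Since $LR_G=\log_2(1+R_G)$ and $\log_2$ is increasing, it suffices to prove $1+R_G(\rho^{\otimes k})\le (1+R_G(\rho))^k$; taking $\log_2$ of both sides then yields $LR_G(\rho^{\otimes k})\le k\,LR_G(\rho)$ directly. Thus the entire task is to show $R_G(\rho^{\otimes k})\le (1+R_G(\rho))^k-1$.

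First I would fix $r=R_G(\rho)$ and choose a free state $\pi\in\mathcal{F}_R$ and an arbitrary state $\sigma\in\mathcal{D}_{\mathcal{H}}$ realizing the optimum, so that $\rho+r\sigma=(1+r)\pi$. (If the infimum in the definition of $R_G$ is not attained, I would instead take $r=R_G(\rho)+\delta$ and let $\delta\to 0$ at the end; closedness of $\mathcal{F}_R$ from (R9) keeps the limit feasible.) Tensoring $k$ copies gives $(1+r)^k\,\pi^{\otimes k}=(\rho+r\sigma)^{\otimes k}$, and by property (R5) the left-hand free state $\pi^{\otimes k}$ again lies in $\mathcal{F}_R$. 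This is the crucial conceptual input: it lets the single-copy witness be lifted verbatim to $k$ copies.

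Next I would expand the right-hand side by multilinearity of the tensor product,
$$(\rho+r\sigma)^{\otimes k}=\sum_{S\subseteq[k]}r^{|S|}\bigotimes_{i=1}^{k}X_i^{(S)},$$
where $X_i^{(S)}=\sigma$ when $i\in S$ and $X_i^{(S)}=\rho$ otherwise, and I would isolate the $S=\emptyset$ term, which equals $\rho^{\otimes k}$. Collecting the rest into $M=\sum_{S\ne\emptyset}r^{|S|}\bigotimes_i X_i^{(S)}$, I observe that $M\ge 0$ (each summand is a tensor product of states with a nonnegative coefficient) and that $\tr M=\sum_{S\ne\emptyset}r^{|S|}=(1+r)^k-1$, using $\sum_{S\subseteq[k]}r^{|S|}=(1+r)^k$. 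Writing $\tau=M/((1+r)^k-1)$ for the normalized state, the identity becomes
$$\frac{\rho^{\otimes k}+\bigl((1+r)^k-1\bigr)\tau}{1+\bigl((1+r)^k-1\bigr)}=\pi^{\otimes k}\in\mathcal{F}_R.$$
This exhibits $\tau$ as an admissible mixing state for $\rho^{\otimes k}$ with robustness parameter $(1+r)^k-1$, so by the definition of $R_G$ we get $R_G(\rho^{\otimes k})\le (1+r)^k-1=(1+R_G(\rho))^k-1$, which is exactly the reduced inequality.

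I do not expect a serious obstacle. The only technical points are (i) treating the robustness optimum as an exact identity, handled by the $\delta\to 0$ limit together with (R9), and (ii) verifying the positivity and the trace $(1+r)^k-1$ of $M$, both of which are immediate. The entire argument hinges on (R5), which guarantees that the tensor power of the witnessing free state $\pi$ stays free.
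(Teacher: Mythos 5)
Your proof is correct and follows essentially the same route as the paper's: both expand $(\rho+r\sigma)^{\otimes k}$ by multilinearity, use closure of $\mathcal{F}_R$ under tensor powers (R5) to keep the witness $\pi^{\otimes k}$ free, and read off $R_G(\rho^{\otimes k})\le(1+R_G(\rho))^k-1$ before taking logarithms. Your version is in fact tidier than the paper's, since you explicitly normalize the residual term $M$ into an admissible mixing state and handle the possible non-attainment of the infimum, points the paper glosses over.
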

\begin{proof}
	Assume $\sigma$ is the optimal in terms of $R_G(\cdot)$ for $\rho$, let $r=R_G(\rho)$, let $\pi\in\mathcal{F}_R$ be the optimal state such that 
	\begin{align*}
		\rho+r\sigma=(1+r)\pi\in (1+r)\mathcal{F}_R,
	\end{align*}
	hence, 
	\begin{align*}
		&(\rho+r\sigma)^{\otimes k}\nonumber\\
		=&\rho^{\otimes k}+r\sum_{m=1}^k\sigma_{A_m}\otimes\rho_{\overline{A_m}}^{\otimes(k-1)}
		+\cdots+r^{k-1}\sum_{m=1}^k\rho_{A_m}\otimes\sigma_{\overline{A_m}}^{\otimes(k-1)}+r^k\sigma^{\otimes k}\\
		=&(1+r)^k\pi^{\otimes k},	\end{align*}
	as $\pi\in \mathcal{F}_R,$ $\pi^{\otimes k}\in \mathcal{F}_R$, then $R_G(\rho^{\otimes k})\le (r+1)^k-1,$
	that is,
	\begin{align*}
		1+R_G(\rho^{\otimes k})\le (1+r)^k,
	\end{align*}
	then 
	\begin{align*}
		LR_G(\rho^{\otimes k})\le kLR_G(\rho).
	\end{align*}
\end{proof}

Theorem \ref{th6}:\emph{
		Assume $\rho$ and $\phi$ are states, then we have
	\begin{align}
		E_{C,\phi}^{an}(\rho)\ge\frac{\overline{LR}_G(\rho)}{LR_G(\phi)}.
	\end{align}
}

\begin{proof}
	Assume $\Lambda_n\in \epsilon_n$-RNG is the optimal sequence of maps in terms of $E_{C,\phi}^{an}(\rho)$, there exists an integer $N_1$ such that when $n>N_1$, there exists $\Lambda_n\in RNG(\epsilon_n)$,  
	\begin{align*}
 	\frac{1}{2}	||\rho^{\otimes n}-\Lambda_n(\phi^{\otimes k_n})||\le \epsilon_n.
	\end{align*}
	Due to Lemma \ref{lRg}, Lemma \ref{l4} and Lemma \ref{l5}, we have
	\begin{align}
		\frac{1}{n}LR_G^{2\epsilon_n}(\rho^{\otimes n})\le&	\frac{1}{n}LR_G^{\epsilon_n}(\Lambda(\phi^{\otimes k_n}))\nonumber\\\le& \frac{1}{n}LR_G(\phi^{\otimes k_n})\nonumber\\
		\le&\frac{k_n}{n}LR_G(\phi),
	\end{align}
	then we have the following
	\begin{align}
		E^{an}_C(\rho)=&\limsup\limits_{n\rightarrow \infty}\frac{k_n}{n}\nonumber\\
		\ge& \frac{\sup\limits_{\epsilon_n\rightarrow0}\limsup\limits_{n\rightarrow\infty}\frac{1}{n}LR_G^{2\epsilon_n}(\rho^{\otimes n})}{LR_G(\phi)}\nonumber\\
		\ge&\frac{\overline{LR}_G(\rho)}{LR_G(\phi)}.
	\end{align}
\end{proof}

\subsection{Proof of Theorem \ref{th8}}\label{pth8}

{\color{black}
\begin{lemma}
	Assume $\mathcal{H}$ is a Hilbert space with finite dimensions, let $\{\ket{l}\}_{l}$ be the referenced basis of $\mathcal{H}$ and $\mathcal{D}_{\mathcal{H}}(\cdot)=\sum_l\ket{l}\bra{l}\cdot \ket{l}\bra{l}$.  If $\Pi$ is a superchannel which transforms $\mathcal{N}\in \mathcal{M}_R$ to some channel in $\mathcal{M}_R$, then $\Pi$'s Choi matrix satisfies 
	\begin{align}
		(\mathcal{P}\otimes \mathcal{K})Y=0,\label{p4}
	\end{align}
	here $\mathcal{K}=\mathcal{D}\otimes(I-\mathcal{D})$, $\mathcal{P}=I-\mathcal{K}$, and $Y$ is the C-J operator of $\Pi$.
\end{lemma}
\begin{proof}
	As $\mathcal{N}\in \mathcal{M}_R$, then 
	\begin{align*}
		&\mathcal{D}\circ\mathcal{N}\circ\mathcal{D}=\mathcal{N}\circ\mathcal{D}\\\Longleftrightarrow& (I-\mathcal{D})\circ\mathcal{N}\circ\mathcal{D}=0\\
		\Longleftrightarrow&(\mathcal{D}\otimes(I-\mathcal{D}))J_{\mathcal{N}}=0,
	\end{align*}
	The last formula is due to that $J_{\mathcal{A}\circ \mathcal{C}\circ\mathcal{B}}=(\mathcal{B}^T\otimes\mathcal{A})J_{\mathcal{C}}$ and $\mathcal{D}^T=\mathcal{D}.$ Next denote $\mathcal{K}=\mathcal{D}\otimes(I-\mathcal{D}),$ and $\mathcal{P}=I-\mathcal{K}.$
	
	At last, 
	\begin{align*}
		&	\Pi(\mathcal{N})\in \mathcal{M}_R,\hspace{2mm}\forall\mathcal{N}\in\mathcal{M}_R\\
		\Longleftrightarrow&\mathcal{K}\circ\Pi\circ\mathcal{P}=0\\
		\Longleftrightarrow &(\mathcal{P}\otimes \mathcal{K})Y=0.
	\end{align*}
\end{proof}

Theorem \ref{th8}:\emph{
Assume $\mathcal{N}$ is a quantum channel, then 
\begin{align*}
	c_{\theta}(\mathcal{N})=&\log_2\sup m\\
	\textit{s. t.}\hspace{3mm}& \mathrm{tr} J_{\mathcal{N}}F\ge 1-\theta,\\
	&0\le	F\le\rho_A\otimes I_B,\nonumber\\
	&	\mathrm{tr}_AF=\frac{I_B}{m},\nonumber\\
	&\mathrm{tr}\rho_A=1.\nonumber\\
\end{align*}
}

\begin{proof}

First we compute the following,
\begin{align}
	\sum_k \mathrm{tr}[\Pi\circ\mathcal{N}(\ket{k}\bra{k})\ket{k}\bra{k}]\nonumber=&\sum_k\mathrm{tr}[\mathcal{R}(\ket{k}\bra{k})\ket{k}\bra{k}]\nonumber\\
	=&\sum_k\mathrm{tr}J_{\mathcal{R}}\ket{kk}\bra{kk},\nonumber\\
	=&\mathrm{tr}J_{\mathcal{R}}D\label{fcm1}
\end{align}
here $\mathcal{R}=\Pi\circ\mathcal{N}$, $D=\sum_{k}\ket{kk}\bra{kk}$ and $J_{\mathcal{R}}=\mathrm{tr}_{AB}(J_{\mathcal{N}}^T\otimes\mathcal{I}_{MM^{'}})Y,$ here $Y$ is the Choi-Jamiolkowski matrix of $\Pi$.

As $\Pi$ considered here is NS and DCNG, then its Choi-Jamiolkowski matrix should satisfy properties (\ref{p1}), (\ref{p2}),(\ref{p3}) and (\ref{p4}).
And if $Y$ is a feasible NS and DCNG, then for any $\pi\in S_m$, here $S_m$ is the symmetric group of degree $m$, $Y^{'}=(\pi_M\otimes \pi_{M^{'}})Y(\pi_M\otimes\pi_{M^{'}})^{\dagger}$ is also feasible. Moreover, any convex combination of two feasible operators also is feasible. Therefore, if $Y$ is feasible, $\tilde{Y}$ is also feasible, which is defined as
\begin{align*}
	\tilde{Y}=\mathcal{T}(Y)=\frac{1}{m!}\sum_{\substack{\pi_M\in S_m,\\\pi_{M^{'}}\in S_m}}(\tau_{M}\otimes \tau_{M^{'}})Y(\tau_{M}\otimes \tau_{M^{'}})^{\dagger}.
\end{align*}

Next as $\mathcal{T}(D)=D$, we have
\begin{align*}
	\mathrm{tr}_{MM^{'}}[Y(\mathcal{I}\otimes D)]=&\mathrm{tr}_{MM^{'}}[Y(\mathcal{I}\otimes \mathcal{T}(D))]\\
	=\mathrm{tr}_{MM^{'}}[\tilde{Y}(\mathcal{I}\otimes D)],
\end{align*}

Based on \cite{7353184}, $\tilde{Y}$ can be written as
\begin{align}
	\tilde{Y}=F\otimes D+G\otimes (I-D),\label{fcmm}
\end{align}
where $F$ and $G$ are some operators. Due to the properties (\ref{p1}),(\ref{p2}),(\ref{p3}) and (\ref{p4}),
\begin{align}
	F\ge 0,G\ge 0,\nonumber\\
	F+(m-1)G=\rho_A\otimes I_B,\nonumber\\
	\mathrm{tr}_AF=\frac{I}{m},\mathrm{tr}\rho_A=1.\label{p0}
\end{align}

Combing (\ref{fcm1}) and (\ref{fcmm}), we have

\begin{align}
	f(\mathcal{N},m)=&\sup\mathrm{tr}(J_{\mathcal{N}}F),\\
	\textit{s. t.}\hspace{3mm}&0\le	F\le\rho_A\otimes I_B,\nonumber\\
	&	\mathrm{tr}_AF=\frac{I_B}{m},\nonumber\\
	&\mathrm{tr}\rho_A=1.\nonumber
\end{align}
\end{proof}
}

	\end{document}